\documentclass[11pt]{article}

\usepackage[utf8]{inputenc}
\usepackage[T1]{fontenc}
\usepackage{lmodern}
\usepackage[DIV=11]{typearea} 
\usepackage{tikz}

\usepackage{microtype}

\usepackage{amssymb}
\usepackage{amsmath}
\usepackage{amsthm}
\usepackage{thmtools}
\usepackage{thm-restate}

\usepackage{xcolor}
\usepackage{xspace}
\usepackage{xfrac}

\usepackage{subcaption}

\usepackage{comment}
\usepackage[normalem]{ulem}

\usepackage[backend=biber, style=alphabetic, backref=true, doi=false, url=false, maxcitenames=3, mincitenames=3, maxbibnames=10, minbibnames=10, sortlocale=en_US]{biblatex}  

\usepackage[ocgcolorlinks]{hyperref} 
\usepackage{cleveref}
\usepackage[bottom]{footmisc}

\usepackage[procnumbered,ruled,vlined,linesnumbered]{algorithm2e}

\addbibresource{main.bib}

\DeclareUnicodeCharacter{221A}{$\sqrt{}$}
\DeclareUnicodeCharacter{3F5}{$\eps$}

\colorlet{DarkRed}{red!50!black}
\colorlet{DarkGreen}{green!50!black}
\colorlet{DarkBlue}{blue!50!black}

\hypersetup{
	linkcolor = DarkRed,
	citecolor = DarkGreen,
	urlcolor = DarkBlue,
	bookmarksnumbered = true,
	linktocpage = true
}

\graphicspath{ {./input/images/} }

\DontPrintSemicolon
\SetKw{KwAnd}{and}
\SetProcNameSty{textsc}
\SetFuncSty{textsc}

\newtheorem{Lem}{Lemma}
\newtheorem{theorem}{Theorem}
\newtheorem{Cor}{Corollary}

\newtheorem{Inv}{Invariant}

\usepackage{todonotes}

\newcommand{\eps}{\epsilon}

\title{Decremental APSP in Directed Graphs Versus an Adaptive Adversary}
\author{Jacob Evald\thanks{\texttt{jeav@di.ku.dk}. The author is supported by the Starting Grant 7027-00050B from the Independent Research Fund Denmark under the Sapere Aude research career programme.}
\and Viktor Fredslund-Hansen\thanks{\texttt{viha@di.ku.dk}. The author is supported by the Starting Grant 7027-00050B from the Independent Research Fund Denmark under the Sapere Aude research career programme.}
\and Maximilian Probst Gutenberg\thanks{\texttt{probst@di.ku.dk}. The author is supported by Basic Algorithms Research Copenhagen (BARC), supported by Thorup's Investigator Grant from the Villum Foundation under Grant No. 16582.}
\and Christian Wulff-Nilsen\thanks{\texttt{koolooz@di.ku.dk}, \texttt{http://www.diku.dk/$_{\widetilde{~}}$koolooz/}. The author is supported by the Starting Grant 7027-00050B from the Independent Research Fund Denmark under the Sapere Aude research career programme.}}

\date{}
\begin{document}

\maketitle

\begin{abstract}
Given a directed graph $G = (V,E)$, undergoing an online sequence of edge deletions with $m$ edges in the initial version of $G$ and $n = |V|$, we consider the problem of maintaining all-pairs shortest paths (APSP) in $G$. 

Whilst this problem has been studied in a long line of research [ACM'81, FOCS'99, FOCS'01, STOC'02, STOC'03, SWAT'04, STOC'13] and the problem of $(1+\eps)$-approximate, weighted APSP was solved to near-optimal update time $\tilde{O}(mn)$ by Bernstein [STOC'13], the problem has mainly been studied in the context of \emph{oblivious} adversaries, which assumes that the adversary fixes the update sequence before the algorithm is started. 

In this paper, we make significant progress on the problem in the setting were the adversary is adaptive, i.e.  can base the update sequence on the output of the data structure queries. We present three new data structures that fit different settings:
\begin{itemize}
    \item We first present a \emph{deterministic} data structure that maintains the \emph{exact} distances with total update time $\tilde{O}(n^3)$\footnote{We use $\tilde O$-notation to hide logarithmic factors.}. 
    \item We also present a \emph{deterministic} data structure that maintains $(1+\eps)$-approximate distance estimates with total update time $\tilde O(\sqrt{m} n^2/\eps)$ which for sparse graphs is $\tilde O(n^{2+1/2}/\eps)$.
    \item Finally, we present a randomized $(1+\eps)$-approximate data structure which works against an adaptive adversary; its total update time is $\tilde O(m^{2/3}n^{5/3} + n^{8/3}/(m^{1/3}\epsilon^2))$ which for sparse graphs is $\tilde O(n^{2+1/3})$.
\end{itemize}
Our exact data structure matches the total update time of the best \emph{randomized} data structure by Baswana et al. [STOC'02] and maintains the distance matrix in near-optimal time. Our approximate data structures improve upon the best data structures against an adaptive adversary which have $\tilde{O}(mn^2)$ total update time [JACM'81, STOC'03].
\end{abstract}

\pagebreak

\section{Introduction}\label{sec:Intro}

Shortest paths is a classical algorithmic problem dating back to the $1950$s. The two main variants are the all-pairs shortest paths (APSP) problem and the single-source shortest paths (SSSP) problem, both of which have been extensively studied in various models, including the partially and fully dynamic setting. 

A dynamic graph algorithm is an algorithm that maintains information about a graph that is subject to updates such as insertions and deletions of edges or vertices. Such a graph can model real-world networks that change over time, such as road networks where traffic changes and roads are blocked from time to time.
We say that a dynamic graph problem is \textit{decremental} if it only allows deletions,  \textit{incremental} if it only allows insertions and \textit{fully-dynamic} if it allows both. Incremental and decremental graphs are referred to as being \textit{partially-dynamic}. A dynamic graph algorithm aims to efficiently process a sequence of online updates interspersed with queries about some property of the underlying dynamic graph.

\subsection{Problem Definition}
In this paper, we consider the \emph{decremental all-pairs shortest-paths} problem where the goal is to efficiently maintain shortest path distances between all pairs of vertices in a decremental directed graph $G=(V, E)$. We shall restrict our attention to the case where $G$ is unweighted. Letting $m$ denote the initial number of edges and $n = |V|$, we want a data-structure which for any $u,v \in V$ supports the following operations:
\begin{itemize}
    \item $\textsc{Dist}(u,v)$: reports the shortest path distance $d_G(u,v)$ from $u$ to $v$ in the current version of $G$,
    \item $\textsc{Delete}(u,v)$: deletes an edge $(u,v)$ from $E$.
\end{itemize}
We furthermore consider the problem also in its relaxed version where we only aim to maintain \emph{approximate} distance estimates which can then be queried. We denote by $\tilde{d}_G(u,v)$ a distance estimate for the distance from $u$ to $v$ and we say that an APSP algorithm has an \textit{approximation ratio} (or \textit{stretch}) of $t > 1$ if for any $u,v \in V$, we have that $d_G(u,v) \leq \tilde{d}_G(u,v) \leq t \cdot d_G(u,v)$. This paper will be concerned with both the exact and the $(1+\epsilon)$-approximate version of the problem. 

Another focus of this article is the \textit{adversarial model}; the adversarial model defines the model under which the sequence of updates and queries are assumed to be made by an \textit{adversary}. We say that a performance guarantee of an algorithm works against an \emph{oblivious} adversary if the adversary must define the sequence of updates before the algorithm starts for the guarantee to hold. Thus the sequence of updates is independent of any random bits used by the algorithm. This is opposed to algorithms that work against an \emph{adaptive} adversary, where the adversary is allowed to create the update sequence 
``on the go'', e.g. based on answers to previous queries made to the data structure. Depending on the data structure, these choices may not be independent on the random choices made, which may result in the data structure performing poorly. One key advantage of a data structure that works against an adaptive adversary is that it can be used inside an algorithm as a black box, regardless of whether that algorithm adapts its updates to answers to queries. We point out that \emph{deterministic} data structures always work against an adaptive adversary.

The performance of a \emph{partially-dynamic} algorithm is usually measured in terms of the \textit{total update time}. That is, the accumulated time it takes to process all updates (edge deletions). The \textit{query time}, on the other hand, is the time to answer a single distance query. A natural goal is to minimize the total update time while keeping the stretch and query time small. Since all the structures presented in this paper explicitly maintain a distance matrix, the query time is constant.



\subsection{Prior Work}
The naive approach to dynamic APSP is to recompute the shortest path distances after each update using the best static algorithm. The query time is then constant and the time for a single update is $\tilde{O}(mn)$ for APSP and $\tilde{O}(m)$ for SSSP. At the other end of the spectrum one could achieve optimal update time by simply updating the input graph and only running an SSSP algorithm whenever a query is processed. Running a static algorithm each time, however, fails to reuse any information between updates whatsoever and gives a high query time, motivating more efficient dynamic approaches that do this.

In 1981, Even and Shiloach \cite{shiloach1981line} gave a deterministic data-structure for maintaining a shortest path tree to given depth $d$ in an undirected, unweighted decremental graph in total time $O(md)$. Henzinger and King \cite{henzinger2005bicon} and King \cite{king1999fully} later adapted this to directed graphs with integer weights. Running their structure for each vertex solves the decremental all-pairs shortest paths problem in $O(mn^2W)$ time, where edge weights are integers in $[1,W]$. 

Henzinger and King were the first to improve upon this bound, giving an algorithm with total update time $\tilde{O}(m n^{2.5} \sqrt{W})$ \cite{king1999fully} which is an improvement for $W = \omega(n)$. Demetrescu and Italiano \cite{demetrescu2006fully} improved this data structure slightly and showed that the restriction to integral edge weights can be removed. Finally, the same authors \cite{demetrescu2004new} presented a data structure with total update time $\tilde{O}(mn^2)$ which is the state of the art for any data structure against an adaptive adversary up to today. In fact, their algorithm can be extended to a fully-dynamic algorithm with $\tilde{O}(n^2)$ amortized update time and which can handle vertex updates\footnote{In this case, vertex updates refers to insertions or deletions of vertices with up to $n-1$ incident edges.}. We also point out that this data structure was later simplified and generalized by Thorup \cite{thorup2004fully}.

Around the same time Baswana, Hariharan, and Sen \cite{baswana2002improved} gave an \emph{oblivious} Monte-Carlo construction with total update time $\tilde{O}(n^3)$ for \emph{unweighted} graphs. Further, they showed that their data structure could be adapted to give an $(1+\eps)$-approximate APSP algorithm for \emph{weighted} graphs with total update time of $\tilde{O}(\sqrt{m}n^2/\varepsilon)$. Finally, Bernstein presented a $(1+\eps)$-approximate algorithm with total running time $\tilde{O}(mn \log W / \eps)$ by using a clever approach of shortcutting paths \cite{bernstein2016maintaining}. Whilst his algorithm achieves near-optimal running time, again, the algorithm has to assume an oblivious adversary.

More recently, Karczmarz and Łącki \cite{karczmarz2020simple} gave a deterministic $(1+\epsilon)$-approximate APSP algorithm for decremental graphs that runs in total time $\tilde O(n^3 \log W/\epsilon)$. They also presented the first non-trivial algorithm for incremental graphs \cite{karczmarz2019reliable} achieving total update time $\tilde O(mn^{4/3}\log W/\epsilon)$.

We refer the reader to Appendix \ref{sec:relatedWork} for a more comprehensive treatment of related work which also includes algorithms for \emph{undirected} graphs and algorithms with larger stretch.

\subsection{Our Contributions}
In this paper, we present three new data structures for the all-pairs shortest paths problem. Our first theorem gives a \emph{deterministic} data structure for the exact variant of the problem with near-optimal $\tilde O(n^3)$ total update time. It also matches the best \emph{randomized} algorithm by Baswana et al. \cite{baswana2002improved} and constitutes a significant improvement over the previous best bound of $\tilde O(mn^2)$ which is obtained by running an ES-tree \cite{shiloach1981line} from every source or by the data structure Italiano et al. \cite{demetrescu2004new} and improves over all but the sparsest graph densities. Our data structure is near-optimal as we will show an $\Omega(n^3)$ lower bound on the total update time of any decremental data structure that explicitly maintains the distance matrix.

\begin{theorem}\label{Thm:DetExact}
Let $G$ be an unweighted directed graph with $n$ vertices and initially $m$ edges. Then there exists a deterministic data structure which maintains all-pairs shortest path distances in $G$ undergoing an online sequence of edge deletions using a total time of $O(n^3\log^3n)$. The $n\times n$ distance matrix is explicitly maintained so that at any point, a shortest path distance query can be answered in constant time. The data structure can report a shortest path between any query pair in time proportional to the length of the path. 
\end{theorem}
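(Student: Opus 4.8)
The plan is to derandomize the $\tilde{O}(n^3)$ randomized construction of Baswana, Hariharan and Sen~\cite{baswana2002improved}: their only essential use of randomness is a hierarchy of random pivot sets, and I will replace these by deterministically maintained hitting sets for shortest paths. Concretely, for each scale $i = 0, 1, \dots, \lceil \log n \rceil$ I maintain a \emph{pivot set} $P_i \subseteq V$ with $|P_i| = \tilde{O}(n/2^i)$ satisfying the invariant that, in the current graph, \emph{every} shortest path using at least $2^i$ edges contains a vertex of $P_i$ (with $P_0 = V$). For every pivot $p \in P_i$ I run a decremental Even--Shiloach in-tree and out-tree rooted at $p$ \emph{in $G$ itself}, truncated at distance $\Theta(2^i)$. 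The cost of the trees at scale $i$ is $|P_i| \cdot O(m \cdot 2^i) = \tilde{O}(mn)$, so summing over the $O(\log n)$ scales gives $\tilde{O}(mn) = \tilde{O}(n^3)$.

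\textbf{Exact distances and matrix maintenance.} For a pair $(u,v)$ let $i = \lfloor \log d_G(u,v) \rfloor$; then the shortest $u$--$v$ path has $\ge 2^i$ edges, so by the invariant some $p \in P_i$ lies on it, and $d_G(u,v) = d_G(u,p) + d_G(p,v)$ with both terms $< 2^{i+1}$, hence both are read off exactly from the depth-$\Theta(2^i)$ trees at $p$. Thus $d_G(u,v) = \min_i \min_{p \in P_i}\big(d^{\mathrm{in}}_p(u) + d^{\mathrm{out}}_p(v)\big)$, and every summand is $\ge d_G(u,v)$, so the minimum is never an underestimate. To keep the $n\times n$ matrix explicit I store, for each ordered pair, the multiset of the $\tilde{O}(n)$ candidate values $d^{\mathrm{in}}_p(u)+d^{\mathrm{out}}_p(v)$ in a heap; whenever a level of a vertex increases in some tree rooted at $p$, the affected candidates are updated and the matrix entry is refreshed by a heap-min. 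The number of level increases across all trees is $\sum_i |P_i|\cdot O(n\cdot 2^i) = \tilde{O}(n^2)$, and each triggers $O(n)$ candidate updates, so the bookkeeping is $\tilde{O}(n^3)$ overall. A shortest path is reported by concatenating the two tree paths, in time proportional to its length.

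\textbf{Deterministic pivot maintenance --- the crux.} For the smallest scales, where $2^i = \polylog n$, one can simply take $P_i = V$, since then $|P_i| = \tilde{O}(n/2^i)$ is vacuous and the trees have depth $\tilde O(1)$. For larger scales the saving matters, and here lies the difficulty. A hitting set for the $\le n^2$ shortest paths that currently have $\ge 2^i$ edges has size $\tilde{O}(n/2^i)$ (each such path has $\ge 2^i$ vertices, so a greedy ``vertex on the most still-unhit paths'' rule terminates after $\tilde{O}(n/2^i)$ steps) and is computable in $\tilde{O}(n^2)$ time from the current matrix; but under edge deletions new shortest paths appear that a fixed $P_i$ need not hit. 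I plan to rebuild $P_i$ lazily, in phases, and argue that within a phase the invariant is preserved: any path that is a shortest path \emph{now} with $\ge 2^i$ edges consists of edges all present at the start of the phase, and the only way it can fail to be hit is if the distance between two of its vertices strictly crossed the threshold $2^i$ during the phase, an event chargeable to the deletions of the phase. Balancing the phase length against both the per-rebuild cost and the number of forced rebuilds --- and performing each rebuild using the already-maintained trees rather than recomputing shortest paths from scratch --- so that the amortized cost stays within the $\tilde{O}(n^3)$ budget at every scale \emph{simultaneously} is the main obstacle and the part where the bulk of the technical work will go; once this is in place, the exactness and the matrix bookkeeping above are essentially routine.
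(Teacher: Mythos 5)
Your proposal follows the Baswana--Hariharan--Sen template (global hitting sets $P_i$ plus depth-$\Theta(2^i)$ Even--Shiloach in/out-trees at every pivot) and defers the one genuinely hard step --- deterministic maintenance of the hitting sets under deletions --- to a ``lazy rebuild in phases'' plan that is not carried out. This is a real gap, not a routine omission, and the sketched plan faces two concrete obstacles. First, every pivot ever added at scale $i$ must be charged a fresh decremental ES tree costing $O(m\cdot 2^i)$ for the remainder of the sequence, so you need the \emph{total} number of distinct pivots over all phases to be $\tilde O(n/2^i)$, not merely the size of each rebuilt $P_i$; rebuilding from scratch each phase gives no such bound. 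Second, between rebuilds the invariant can fail silently: a pair whose distance crosses $2^i$ after a deletion acquires a new shortest path that the greedy set (computed only from paths that were shortest at rebuild time) need not hit, and you neither bound the number of such failures per phase nor explain how to detect them without recomputing distances. The observation that the offending path's edges were all present at the start of the phase does not help, since the greedy procedure only hit paths that were \emph{shortest} then.

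The paper resolves exactly this difficulty by abandoning global hitting sets altogether. For each source $u$ and scale $D_i$ it maintains a \emph{local} separator $S_i(u)$ of size $O(n\log n/D_i)$ that intersects every shortest path from $u$ of length about $D_i$; the set is monotone growing, repaired by two parallel truncated BFS searches (from $u$ forward and from the offending vertex backward) that stop at a ``thin'' layer, and its total size and maintenance cost are bounded by charging each cut-off vertex once (Lemmas~\ref{Lem:ThinLayer} and~\ref{Lem:DynSep}). Crucially, no ES trees are rooted at separator vertices: since both hops of the decomposition $u\leadsto s\leadsto v$ have length at most $D_i$, the estimates $\tilde d_{i-1}(u,s)+\tilde d_{i-1}(s,v)$ are composed recursively from the lower scales via per-pair priority queues over a \emph{snapshot} of $S_i(u)$ taken when $d_G(u,v)$ first exceeds $D_i$ (the snapshot property is part~2 of Lemma~\ref{Lem:DynSep}). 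This sidesteps both obstacles above: the separator never needs to be rebuilt, and adding a separator vertex costs nothing beyond its queue entries. If you want to salvage your route, the missing ingredient is precisely a deterministic, incrementally-growing hitting-set maintenance procedure with a global size bound --- which is what the paper's separator lemma provides, but only per source, which is why the paper also had to replace the pivot ES trees with the cross-scale recursion.
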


Our second result is concerned with maintaining $(1+\epsilon)$-approximate all-pairs shortest path distances. This constitutes the first deterministic data structure that solves the problem in subcubic time with small approximation error (except for graphs that are not extremely dense). In fact, for very sparse graphs with $m = \tilde{O}(n)$, our update time even matches the near-optimal result by Bernstein \cite{bernstein2016maintaining} with total update time $\tilde{O}(mn)$.

\begin{theorem}\label{Thm:DetApprox}
Let $G$ be an unweighted directed graph with $n$ vertices and initially $m$ edges. Then given $\epsilon > 0$, there exists a deterministic data structure that maintains all-pairs $(1+\epsilon)$-approximate shortest path distances in $G$ undergoing an online sequence of edge deletions using a total time of $O(\sqrt m n^2\log^2n/\epsilon)$. At any point, a $(1+\epsilon)$-approximate shortest path distance query can be answered in constant time and a $(1+\epsilon)$-approximate shortest path between the query pair can be reported in time proportional to the length of the path.
\end{theorem}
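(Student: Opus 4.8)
The plan is to split distances at a threshold $\delta$, which I will set to $\delta=\Theta\!\big(n/(\sqrt{m}\,\epsilon)\big)$; if $m=\Theta(n^2)$ the target bound is $\tilde{O}(n^3/\epsilon)$ and one can instead invoke Theorem~\ref{Thm:DetExact}, and if $m$ is small enough that a full Even--Shiloach tree from every source already costs $O(mn^2)\le\tilde{O}(\sqrt m n^2/\epsilon)$ there is nothing to do, so assume $1\le\delta$ lies strictly between these extremes. For \emph{short distances} I maintain, from every source $u\in V$, a decremental Even--Shiloach tree \cite{shiloach1981line} — in the directed, integer-weight version of King \cite{king1999fully} — truncated at depth $\delta$. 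This keeps $d_G(u,v)$ exact whenever $d_G(u,v)\le\delta$, reports the corresponding shortest path in time linear in its length, has total update time $O(n\cdot m\delta\cdot\polylog n)=\tilde{O}(\sqrt m n^2/\epsilon)$, and in particular supplies $d_G(u,s)$ for every $u$ and every ``hub'' $s$ (defined next) with $d_G(u,s)\le\delta$.

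The second ingredient replaces the usual random hitting set by a \emph{deterministically maintained} one. I want, for every scale $2^i\in[\delta,n]$, a set $S_i$ of size $\tilde{O}\!\big(n/(\epsilon 2^i)\big)$ such that at all times every shortest path on at least $\epsilon 2^i$ vertices contains a vertex of $S_i$ (so in particular $\sum_i|S_i|=\tilde{O}(n/(\epsilon\delta))=\tilde{O}(\sqrt m)$). I would build each $S_i$ greedily, mimicking the classical greedy hitting-set bound but driven entirely by the Even--Shiloach trees already being maintained: a vertex enters $S_i$ once the number of currently-shortest paths of the relevant length through it crosses the greedy threshold. Since distances only increase and are bounded by $n$, the total number of insertions — hence each $|S_i|$ — stays within the claimed bound; this is the step where I would reuse the deterministic hitting-set bookkeeping from the proof of Theorem~\ref{Thm:DetExact}.

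For a pair with $d_G(u,v)\in[2^i,2^{i+1})$ and $2^i>\delta$, let $s\in S_i$ be the first hub on a $u$--$v$ shortest path; then $d_G(u,s)<\epsilon 2^i$ and $d_G(u,v)=d_G(u,s)+d_G(s,v)$ with $d_G(s,v)<2^{i+1}$. It therefore suffices to maintain, from every hub $s\in S_i$, a forward Even--Shiloach tree truncated at depth $2^{i+1}$ and a backward one truncated at depth $\epsilon 2^i$; these cost $\tilde{O}\!\big(|S_i|\cdot m\cdot 2^i\big)=\tilde{O}(mn/\epsilon)$ per scale and $\tilde{O}(mn\log n/\epsilon)$ over all scales, which is dominated by the short-distance term. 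Setting $\tilde d_G(u,v):=\min_i\min_{s\in S_i}\big(d_G(u,s)+d_G(s,v)\big)$ and reading the two additive pieces off the truncated trees overestimates $d_G(u,v)$ by at most the $O(\epsilon 2^i)=O(\epsilon\,d_G(u,v))$ lost to truncation, so after rescaling $\epsilon$ the stretch is $1+\epsilon$; concatenating the two recorded sub-paths reports an approximate shortest path in time linear in its length, and a distance query is $O(1)$ by looking up the maintained matrix entry.

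\textbf{The main obstacle} is keeping the explicit $n\times n$ matrix of the values $\tilde d_G(u,v)$ current, since recomputing all these minima after every deletion is far too expensive. The plan is to recompute an entry only when it must genuinely change: because $d_G(u,v)$ is monotone, bounded by $n$, and only a $(1+\epsilon)$-estimate is required, each entry is recomputed $O(\log_{1+\epsilon}n)=O(\epsilon^{-1}\log n)$ times, and one recomputation is a single minimum over the $\sum_i|S_i|=\tilde{O}(\sqrt m)$ hub candidates, for a total of $\tilde{O}(\sqrt m\,n^2/\epsilon)$. The delicate part — and where the choice $\delta=\Theta(n/(\sqrt m\,\epsilon))$ is pinned down — is detecting these genuine increases without too much overhead: I would store a witness hub per pair, refresh $\tilde d_G(u,v)$ in $O(1)$ time from that witness whenever its tree reports an increase (this keeps $\tilde d_G(u,v)\ge d_G(u,v)$ at all times), and trigger a full recomputation only once the witness-based value has drifted up by a $(1+\epsilon)$ factor. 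Showing that the cheap refreshes and the rebuilding of the hitting sets can all be charged against the $O(\epsilon^{-1})$ full recomputations per entry and against the total Even--Shiloach work — so that nothing exceeds $\tilde{O}(\sqrt m\,n^2/\epsilon)$ — is the crux of the argument; balancing this charging against the short-distance cost $\tilde{O}(nm\delta)$ is exactly what forces $\delta=\Theta(n/(\sqrt m\,\epsilon))$ and yields the stated bound.
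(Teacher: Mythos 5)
Your overall architecture (Even--Shiloach trees up to a threshold $\delta=\Theta(n/(\sqrt m\,\epsilon))$, plus two-hop distances through per-scale hub sets for longer distances, with lazy $(1+\epsilon)$-threshold updates of the explicit matrix) is the same balancing act the paper performs. But the step you delegate to ``greedy hitting-set bookkeeping'' is precisely the part that does not exist, and it is the crux of the whole theorem. You need, for each scale $2^i$, a single \emph{global} set $S_i$ of size $\tilde O(n/(\epsilon 2^i))$ that hits every long shortest path \emph{at every point of the deletion sequence}, maintained deterministically. The proof of Theorem~\ref{Thm:DetExact} does not construct such an object: it maintains, via Lemma~\ref{Lem:DynSep}, a \emph{local} separator $S_i(u)$ for each source $u$ separately (a thin BFS layer around $u$), of size $\tilde O(n/D_i)$ \emph{per source}, i.e.\ total size $\tilde O(n^2/D_i)$ over all sources. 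If you used those as your hubs, the cost of your forward/backward ES trees from hubs would become $\tilde O\bigl((n^2/2^i)\cdot m\,2^i\bigr)=\tilde O(mn^2)$ per scale, destroying the bound. Your alternative --- inserting a vertex into $S_i$ ``once the number of currently-shortest paths through it crosses the greedy threshold'' --- has no efficient implementation (counting shortest paths through every vertex after every deletion is itself super-cubic), and the monotonicity of distances does not by itself bound the total number of insertions into $S_i$ over the sequence: the greedy bound controls the size of a hitting set computed at one instant, not the union of hitting sets over $m$ instants. Deterministically maintaining a small global hitting set under deletions is exactly the open difficulty that the paper's local-separator machinery is designed to circumvent, so this gap cannot be waved away by citation.

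There is a second, more repairable overrun in your matrix-maintenance scheme. Because your sub-distances $d_G(u,s)$ and $d_G(s,v)$ are \emph{exact} ES-tree distances, the witness value $d_G(u,s)+d_G(s,v)$ for a fixed witness $s$ can increase $\Theta(2^i)$ times, one unit at a time; each unit increase of $d_G(s,v)$ must be propagated to every source $u$ currently storing $s$ as its witness for $v$, so the ``cheap $O(1)$ refreshes'' can total $\tilde O\bigl((n/2^i)\cdot n\cdot 2^i\cdot n\bigr)=\tilde O(n^3)$ per scale, which exceeds $\tilde O(\sqrt m\,n^2/\epsilon)$ for sparse graphs. The paper avoids this by building the two-hop distances hierarchically out of \emph{already-rounded} estimates $\tilde d_{i-1}$, so each queue key takes only $O(\log n/\epsilon)$ distinct values and the number of increase-key events per pair is $O(\log n/\epsilon)$ rather than $\Theta(2^i)$. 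You would need to round the hub sub-distances to powers of $(1+\epsilon')$ (and then re-verify the stretch accumulation across scales) before your charging argument closes.
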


Our third result gives a data structure achieving a better time bound. While we use randomization to achieve the improved time bound, our algorithm again works against an adaptive adversary.

\begin{theorem}\label{Thm:Rand}
Let $G$ be an unweighted directed graph with $n$ vertices and initially $m$ edges. Then given any $\epsilon > 0$, there exists a Las Vegas data structure that maintains all-pairs $(1+\epsilon)$-approximate shortest path distances in $G$ under an online sequence of edge deletions using a total expected time of $\tilde O(m^{2/3}n^{5/3}/\epsilon + n^{8/3}/(m^{1/3}\epsilon^2))$. This bound holds w.h.p.~and the data structure works against an adaptive adversary. At any point, a $(1+\epsilon)$-approximate shortest path distance query can be answered in constant time.
\end{theorem}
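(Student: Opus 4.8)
The plan is to build the structure hierarchically from three ingredients: short‑range information maintained \emph{deterministically} via Theorem~\ref{Thm:DetApprox}, long‑range information routed through a small \emph{random} set of ``hub'' vertices, and a recursive/self‑similar treatment of the intermediate scales; randomness enters only in the hub choices, and the crux is to argue it survives an adaptive adversary. Concretely, fix a parameter $h$ (to be optimized). For the short regime, run the deterministic data structure of Theorem~\ref{Thm:DetApprox} truncated so that it only ever explores distances up to $h$, obtaining a deterministic $(1+\epsilon)$‑estimate $\tilde d_G^{\le h}(u,v)$ whenever $d_G(u,v)\le h$. For the long regime, sample a set $S$ of $\tilde\Theta(n/h)$ vertices, each included independently with probability $\Theta(\log n/h)$; then w.h.p.\ $S$ contains a vertex of every shortest path with at least $h$ vertices, in every version of $G$ arising in the update sequence. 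From each $s\in S$ maintain, by Even--Shiloach~\cite{shiloach1981line}, an exact decremental shortest‑path tree both in $G$ (to read off $d_G(s,\cdot)$) and in $G^{\mathrm{rev}}$ (to read off $d_G(\cdot,s)$). A query $(u,v)$ is answered by taking the minimum of $\tilde d_G^{\le h}(u,v)$ and $\min_{s\in S}\bigl(d_G(u,s)+d_G(s,v)\bigr)$ and rounding the result up to the nearest integer power of $1+\epsilon$. To beat the $\tilde O(mnh)$ cost of running all‑sources truncated trees in the short regime, and symmetrically the $\tilde O(mn^2/h)$ cost of full trees from all $\tilde O(n/h)$ hubs, one applies the same scheme recursively to the $\tilde O(n/h)$‑vertex ``hub graph'' and to an inner scale, so that each of $O(\log_{1+\epsilon}n)$ levels costs $\tilde O(mn)$; carrying this out and balancing the parameters yields the total update time $\tilde O(m^{2/3}n^{5/3}/\epsilon + n^{8/3}/(m^{1/3}\epsilon^2))$, the differing $\epsilon$‑powers coming from the one versus two nested invocations of the Theorem~\ref{Thm:DetApprox}‑type subroutine.

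Correctness of the query rule is the routine ``patch the shortest path at its hubs'' argument: on the good event $\mathcal E$ that $S$ hits all long shortest paths, marking a hub $s^\star$ on a shortest $u$-$v$ path gives $d_G(u,s^\star)+d_G(s^\star,v)=d_G(u,v)$ when $d_G(u,v)>h$, while Theorem~\ref{Thm:DetApprox} handles $d_G(u,v)\le h$; the reported value is always a valid upper bound and, on $\mathcal E$, is the exact distance up to the final rounding, hence a $(1+\epsilon)$‑approximation.

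The heart of the proof is the adaptive‑adversary argument. Every reported value is a \emph{deterministic function of the current graph $G$ conditioned on $\mathcal E$}: the short part is produced deterministically, the long part equals the exact distance, and the rounding is fixed. Therefore, if we define the ``canonical'' update sequence $\sigma^\star$ as the one the adversary produces when always fed these canonical answers, the event $\mathcal E$ depends only on $\sigma^\star$ and on $S$, while $\sigma^\star$ is independent of $S$; a union bound over the $O(n^2)$ pairs and $O(m)$ versions along $\sigma^\star$ gives $\Pr[\mathcal E]\ge 1-1/\poly(n)$, and on $\mathcal E$ the real interaction coincides with the canonical one step by step, so the transcript indeed leaks nothing about $S$. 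To upgrade ``w.h.p.\ correct'' to Las Vegas we add a cheap consistency check that flags any step at which the hub‑based estimate disagrees with a freshly computed certificate; a flag triggers re‑sampling of $S$ and a rebuild, which by the same argument occurs with probability $1/\poly(n)$ per step and contributes only $o(1)$ to the expected (and w.h.p.)\ total update time.

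The main obstacle is exactly the claim that the query outputs carry no information about $S$, which is false for a careless implementation -- e.g.\ if the hub‑based estimate is allowed to be a non‑tight upper bound whose value depends on which hubs happen to lie near $u$ and $v$, or if the structure ever reports a witness path. One must therefore (a) let the hub part contribute only exact distances, never a slack upper bound, (b) fix a canonical rounding and tie‑breaking rule independent of the random bits, and (c) report distances only, not paths (consistent with the statement of Theorem~\ref{Thm:Rand}). The secondary difficulty is purely combinatorial: making the depth‑truncated Theorem~\ref{Thm:DetApprox} structure and the recursive call on the hub graph mesh cleanly with the $(1+\epsilon)$‑scale decomposition so that the $1/\epsilon$ and $1/\epsilon^2$ factors come out as stated, and handling the directed setting (two Even--Shiloach trees per hub, and the asymmetry of the patching step) without losing the bound.
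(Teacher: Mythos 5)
Your proposal is a global random hitting-set (Baswana--Hariharan--Sen style) construction, which is not the route the paper takes: the paper keeps the deterministic local-separator framework of Sections~\ref{sec:Sep}--\ref{sec:Det} and injects randomness only by sampling a subset of each local separator $S_i(u)$ to shrink the priority queues, using shallow in-trees to amortize the cost of rescanning the full separator when the sample is exhausted (Lemma~\ref{Lem:SizeZeroSamples}). The difference matters, because your version has a genuine gap in the running time. Without the recursive step, your scheme costs $\tilde\Omega(mn^2/h)$ for the $\tilde O(n/h)$ full-depth Even--Shiloach trees from the hubs plus at least $\tilde O(\sqrt m\,n^2/\epsilon)$ (or $\tilde O(mnh)$) for the short regime; balancing gives roughly $\tilde O(mn^{3/2})$, far above the claimed $\tilde O(m^{2/3}n^{5/3}/\epsilon+n^{8/3}/(m^{1/3}\epsilon^2))$ for dense graphs. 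Everything therefore rests on the recursion on the ``hub graph,'' which you do not work out: that graph is a \emph{weighted} graph on $\tilde O(n/h)$ vertices whose edge weights are distances in $G$ that \emph{increase} over time, so it is not an instance of the decremental unweighted problem and neither Theorem~\ref{Thm:DetApprox} nor the scheme itself applies to it as a black box. Moreover, your asserted outcome --- $O(\log_{1+\epsilon}n)$ levels each costing $\tilde O(mn)$ --- would give total time $\tilde O(mn)$, which for sparse graphs is $\tilde O(n^2)$ and \emph{beats} the theorem's $\tilde O(n^{7/3})$; this is a strong signal that the recursion cannot deliver what is claimed. The specific exponents in the theorem arise in the paper from a three-way balance between the ES-tree threshold $d$, the sampling probability $p$, and the separator sizes $\tilde O(n/D_i)$, and your construction has no counterpart to this balance.

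There is also a leak in your adaptive-adversary argument as stated. You answer with the minimum of the deterministic short-range estimate and the exact hub-based two-hop distance. For a pair with $d_G(u,v)\leq h$, the hub term equals $d_G(u,v)$ exactly precisely when some sampled hub lies on a shortest $u$-to-$v$ path, while the deterministic term may carry up to a $(1+\epsilon)$ slack; the minimum (even after rounding to powers of $1+\epsilon$) can therefore differ depending on which vertices were sampled, so the transcript is \emph{not} a deterministic function of the graph and your ``canonical sequence'' coupling breaks. You would need to consult the hub estimate only for pairs that the deterministic part certifies as long (where, on the good event, the answer is exact and hence sample-independent), with care at the boundary. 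The paper avoids this entirely by rounding every surviving two-hop estimate on a scale up to the \emph{same} value $(1+\epsilon')^{2i}d_{i,j}$, so that the randomized structure provably outputs the same estimates as a deterministic structure maintaining the full separators. Finally, your Las Vegas upgrade via a ``freshly computed certificate'' at every step is unspecified and, if it means an exact SSSP computation, costs $\tilde O(m)$ per check and cannot be afforded at every query.
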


We summarize our results as well as previous state-of-the-art results in Table \ref{tab:results}.

\begin{table}[ht]
\renewcommand{\arraystretch}{1.2}
\centering
\begin{tabular}{|p{3cm}|p{3cm}|p{4.5cm}|p{3cm}|}
\hline
Time & Approximation & Adversary/ Deterministic & Reference \\ \hline
$O(mn^2)$ &  exact & deterministic  & \cite{shiloach1981line, demetrescu2004new} \\ \hline
$\tilde{O}(n^3)$ & exact & deterministic  & \textbf{New Result} \\ \hline
$\tilde{O}(n^3)$ & exact & adaptive  &
\cite{baswana2002improved} \\ \hline
\end{tabular}

\caption{\label{tab:results} Our results and previous state-of-the-art results for decremental APSP in the exact setting.}
\end{table}

\begin{table}[ht]
\renewcommand{\arraystretch}{1.2}
\centering
\begin{tabular}{|p{3cm}|p{3cm}|p{4.5cm}|p{3cm}|}
\hline
Time & Approximation & Adversary/ Deterministic & Reference \\ \hline
$\tilde{O}(\sqrt{m} n^2 / \eps)$ & $(1+\eps)$ & deterministic & \textbf{New Result} \\ \hline
$\tilde{O}(m^{2/3}n^{5/3}/\epsilon + n^{8/3}/(m^{1/3}\epsilon^2))$ & $(1+\eps)$ & adaptive & \textbf{New Result} \\ \hline
$\tilde{O}(\sqrt{m} n^2 / \eps)$ & $(1+\eps)$ & oblivious & \cite{baswana2002improved} \\ \hline
$\tilde{O}(nm)$ & $(1+\eps)$ & oblivious  & \cite{bernstein2016maintaining} \\ \hline
\end{tabular}

\caption{ Our results and previous state-of-the-art results for decremental APSP in the approximate setting.}
\end{table}

\subsection{Overview}\label{subsec:techOverview}
Our overall approach for the deterministic data structures is similar to that of Baswana et al.~\cite{baswana2002improved} but with a key difference that allows us to avoid using a randomized hitting set and instead rely on deterministic separators. The idea of the construction by Baswana et al.~relies on a well-known result which says that if we sample a subset $H^{\rho}_i$ of the vertices of size $\tilde O(n/\rho^i)$ (where $\rho$ is some constant strictly larger than $1$), each with uniform probability, then, w.h.p. we "hit" each shortest-path of length $[\rho^i, \rho^{i+1})$ between any pair of vertices in any version of the graph $G$.

Phrased differently, given vertices $u,v \in V$, we have that if the the shortest path from $u$ to $v$ is of length $\ell \in [\rho^i, \rho^{i+1})$, then there is some vertex $w \in H^{\rho}_i$, such that the concatenation of the shortest path from $u$ to $w$ and the shortest path from $w$ to $v$ is of length $\ell$. For each such $w$, we say $w$ is a \emph{witness} for the tuple $(u,v)$ for distance $\ell$. 

Now for each $u,v \in V$, if the initial distance from $u,v$ was $\ell \in [\rho^i, \rho^{i+1})$, we can check $H^{\rho}_i$ to find a witness $w$. If the length of the path from $u$ to $w$ to $v$ is increased, we can continue our scanning of $H^{\rho}_i$ to see whether another witness exists. If there is no witness $w \in H^{\rho}_i$ left at some stage, we know that there is no path of length $\ell$ left in $G$ w.h.p. and increase our guess by setting $\ell \mapsto \ell + 1$. 

Sampling initially a hitting set $H^{\rho}_i$ for every $i \in [0, \log_{\rho} n]$, we can find the "right" hitting set for each distance $\ell$. Observe now that for each tuple $(u,v) \in V^2$, we have to scan a hitting set of size $\tilde O(n/\rho^i)$ for $\rho^{i+1} - \rho^i \sim \rho^{i+1}$ levels before the hitting set index $i$ is increased which only occurs $O(\log n)$ times, thus we only spend time $\tilde O(n)$ for each vertex tuple $(u,v)$. Thus, the total running time of the searches for witnesses can be bound by $\tilde O(n^3)$. 

\paragraph{The Deterministic Exact Data Structure}
Our construction is similar in the sense that we maintain witnesses for each distance scale $[\rho^i, \rho^{i+1})$ for every $i \in [0,\log_{\rho} n]$ such that each distance $\ell$ is in one such distance scale. The key difference is that instead of using a randomized \emph{global} hitting set $H^{\rho}_i$ for a distance scale $[\rho^i, \rho^{i+1})$, our construction relies on deterministically maintaining a small local vertex separator $S_i(u)$ for every vertex $u \in V$ of size $\tilde{O}(n/\rho^i)$ separating all shortest paths starting in $u$ with a distance in $[\rho^i, \rho^{i+1})$.

More precisely, for each distance scale $[\rho^i, \rho^{i+1})$ and vertex $u \in V$, we maintain a separator $S_i(u)$ that satisfies the invariant that every shortest path from $u$ to a vertex $v$ at distance at least $\rho^i$ is intersected by a vertex in $S_i(u)$. If this invariant is violated after an adversarial update, then we find such a vertex $v$ and need to add additional vertices to $S_i(u)$ during the time step. The challenge is to take these additional separator vertices such that the total size of $S_i(u)$ is not increased beyond $\tilde{O}(n/\rho^i)$. We defer the details of the separator procedure to a later section and continue our discussion of the APSP data structure.

Since we need to detect whether vertices have distance less than $\rho^i$ from $u$ or not in $G$, we further have to use a bottom-up approach to compute distances, i.e. we start with the smallest possible distance range and find all small distances and then find larger distances using the information already computed. This issue did not arise in  Baswana et al.~\cite{baswana2002improved} but can be handled by a careful approach.

It is now easy to see that the scanning for witnesses can be implemented in the same time as in the analysis sketched above by scanning the list of local separator vertices which serve as witnesses instead of the hitting set. Further, we can maintain local vertex separators using careful arguments in total time $\tilde O(mn)$ giving our result in Theorem~\ref{Thm:DetExact}.

\begin{figure}[!hbt]
\centering
\includegraphics[width=0.80\textwidth]{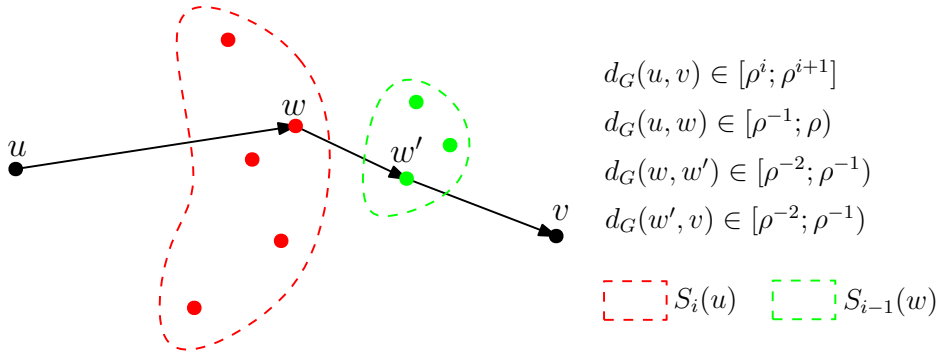}
\caption{Illustration of separators and path ``hierarchy''. Here $u \leadsto v$ goes through a witness $w$, and $w \leadsto v$ goes through $w'$. If the length of the path $w' \leadsto v$ is increased by $\Delta$, the distance estimates of all 2-hop-paths that use $w' \leadsto v$ as a sub-path are increased by that amount. In this case, the estimate for $w \leadsto w' \leadsto v$ is increased and is propagated to the next level where subsequently the estimate for $s \leadsto w \leadsto v$ is increased.}
\end{figure}

\paragraph{The Deterministic Approximate Data Structure} In order to improve the running time for sparse graphs, we can further focus on only considering distances that are roughly at a $(1+\epsilon)$-multiplicative factor from each other. More concretly, instead of increasing the expected distance from $\ell$ to $\ell + 1$ when we cannot find a witness for some path from $u$ to $v$ for distance $\ell$, we can increase the next expected distance level $\ell'$ to $\sim (1+\epsilon)\ell$ and consider every vertex $w$ a witness if there is a path $u \leadsto w \leadsto v$ of length at most $\ell'$. Thus, we search less distances and can thereby increase the time to maintain distances that are at least $d$ in total time $\tilde{O}(n^3/d + mn)$. Again, a careful approach is necessary to ensure that approximations do not add up over distance scales. 

This is no faster than the data structure for exact distances when $d$ is small so in order to get Theorem~\ref{Thm:DetApprox}, we use the $O(mnd)$ data structure of Even and Shiloach \cite{shiloach1981line} to maintain distances up to $d$. Picking $d$ such that $mnd = n^3/d$ gives the result of Theorem~\ref{Thm:DetApprox} (the term $\tilde{O}(mn)$ vanishes since it is subsumed by the two other terms, also we assumed $\epsilon > 0$ to be a constant to simplify the presentation). 

\paragraph{Maintaining Separators}
We now describe how to deterministically maintain the ``small'' local separator for a vertex $s \in V$ with some useful invariants. 

Let $S$ be the local separator for $s$. The first invariant that will be useful is that any vertex $t \in V$ that is reachable from $s$ in $G \setminus S$, is ``close'' to $s$ or roughly within distance $d$. As edges are deleted from $G$, the distances from $s$ to vertices in $S$ may increase, meaning that separator vertices may move away from $s$ as edges are deleted. When some separator vertex $w \in S$ moves too far away, the invariant is re-established by growing BFS trees in parallel, one layer at a time, from $s$ in $G \setminus S$ and from $w$ in $G \setminus S$ with the edges reversed. The search halts when a layer (corresponding to the leaves of the BFS tree at the current iteration) that is ``thin'' is found, and its vertices are added to $S$; vertices that are on the opposite side of the separator than $s$ are cut off as they must all be too far away from $s$. Here, "thin" refers to a BFS layer such that the number of vertices added to the separator is only a factor $\tilde{O}(1/d)$ times the number of vertices cut off. It is well known that such a layer exists (cfr. Lemma $\ref{Lem:ThinLayer}$ for the details). Summing up, it follows that $|S| = \tilde{O}(n/d)$ at all times. By marking vertices as they are searched (according to the side of the BFS layer on which they are found), the vertices that are ``cut off'' from $s$ by the augmented separator will never be searched again, and the cost of searching the edges of either side of the search can be charged to sum of the degree of these vertices, for a total update time of $O(m)$.

For our randomized data structure, we need an additional property that essentially allows us to take a snapshot of the current separator and use it in later updates rather than having to repeatedly update the separator. This will be key to getting an improved randomized time bound. Details can be found in Lemma~\ref{Lem:DynSep} which states our separator result.

\paragraph{The Randomized Approximate Data Structure} 
The randomized approximate data structure of Theorem~\ref{Thm:Rand} follows the same overall approach but is technically more involved. Instead of keeping track of all $2$-hop paths $u\leadsto s\leadsto v$ for every $s\in S_i(u)$, the randomized data structure samples a subset of these by picking each vertex of $S_i(u)$ independently with some probability $p$. It only keeps track of approximate shortest path distances through this subset rather than the full set $S_i(u)$. This will speed up the above since priority queue sizes are reduced in expectation by a factor $p$. However, this approach fails once no short $2$-hop path intersects the sampled subset. At this point, w.h.p.~there should only be short $2$-hop paths through $O(\log n/p)$ vertices of $S_i(u)$ so also in this case, the priority queue sizes can be kept small. However, scanning linearly through $S_i(u)$ to find this small subset will take $\tilde O(n/d)$ time and over all pairs $(u,v)$.

Our solution is roughly the following. Suppose no sampled vertices certify an approximate short path from $u$ to $v$. Then $v$ scans linearly through $S_i(u)$ to find the small $O(\log n/p)$ size subset $S_i'(u)$. Consider the set $W$ of vertices $w$ such that $d_G(w,v)$ is small compared to $d$, i.e., $d_G(w,v)\leq \epsilon d$ for some small constant $\epsilon > 0$. Then we show that the small subset $S_i'(u)$ found for $v$ can also be used for each vertex $w\in W$. The intuition is that for any vertex $s\in S_i(u)\setminus S_i'(u)$, the approximate shortest path distance from $u$ to $w$ through $s$ must be large since otherwise we get a short path $u\leadsto s\leadsto w\leadsto v$ from $u$ to $v$ through $s$, contradicting that $s\notin S_i'(u)$.

It follows that if $|W|$ is large, the $\tilde O(n/d)$ cost of scanning $S_i(u)$ can be distributed among a large number of vertices of $W$. Dealing with the case where $|W|$ is small is more technical so we omit it here.

The way we deal with an adaptive adversary is roughly as follows. Consider a deterministic data structure that behaves like the randomized data structure above, except that it maintains $2$-hop paths $u\leadsto s\leadsto v$ for all $S_i(u)$ rather than only through a sampled subset. The slack from the approximation allows us to round up all ``short'' approximate distances to the same value. Hence, as long as the randomized data structure has short $2$-hop paths, it maintains exactly the same approximate distances as the deterministic structure and hence the approximate distances output to the adversary is independent of the random bits used.



\section{Definitions and Notation}\label{sec:DefsNotation}
In the following, let $G = (V,E)$ be a directed unweighted graph. The graph $G_{\mathbf{rev}}$ is obtained from $G$ by reversing the orientation of each edge. For any two vertices $u,v\in V$, we denote by $u \leadsto v$ a shortest path from $u$ to $v$ in $G$ and let $d_G(u,v)$ denote the distance of such a path. We extend this notation to sets so that, e.g., $d_G(u,V') = \min\{d_G(u,v)\vert v\in V'\}$ for $V'\subseteq V$. 

We define a BFS-layer to mean the set of nodes at some fixed distance from some $v$ in $G$. An \emph{in-tree} in $G$ is a BFS tree in $G_{\mathbf{rev}}$.

We will need notation to refer to dynamically changing data at specific points in time. Consider a sequence of updates to some object $X$ where each update takes place at a time step $t \in \mathbb N$. We denote by $X^{(t)}$ the object just after update $t$. Here, $X$ could be a graph, a shortest path distance, etc.

For handling small distances, we rely on the data-structure of Even and Shiloach \cite{shiloach1981line}, the properties of which we will state in the following lemma:

\begin{Lem}[\cite{shiloach1981line}]\label{lem:estree}
Given a directed unweighted graph $G$ undergoing a sequence of edge deletions, a source vertex $s \in V$, and $d>0$, a shortest path tree in $G$ rooted at $s$ can be maintained up to distance $d$ in total time $O(md)$. The structure requires $O(m)$ space and can be constructed in time $O(m + n)$.
\end{Lem}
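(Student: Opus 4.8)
The plan is to maintain, under the edge deletions, an explicit BFS tree $T$ rooted at $s$ together with a level array $\ell(\cdot)$ satisfying the invariant that $\ell(v)=d_G(s,v)$ whenever $d_G(s,v)\le d$ and $\ell(v)$ equals a sentinel value $>d$ otherwise; the tree edge into $v$ is some edge $(u,v)$ with $\ell(u)=\ell(v)-1$. The graph is stored with doubly linked in- and out-adjacency lists and cross pointers between the two stored copies of each edge, so a deletion can be unlinked in $O(1)$ time (the only real change from the undirected Even--Shiloach construction is that, since $T$ is an out-tree from $s$, a parent of $v$ must be sought among its \emph{in}-neighbours, hence in-adjacency lists are needed). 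In addition, each vertex $v$ keeps a pointer $p(v)$ into its incoming-edge list. The structure is built by a single BFS from $s$ in $O(m+n)$ time and uses $O(m+n)=O(m)$ space.

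On deletion of $(u,v)$ I first unlink the edge from both lists; if $(u,v)$ was not the tree edge into $v$, nothing else happens. Otherwise $v$ has lost its parent, and I invoke a routine $\textsc{Fix}(v)$: scan $v$'s incoming list forward from $p(v)$, advancing $p(v)$ and permanently skipping every edge $(u',v)$ with $\ell(u')\ge\ell(v)$ (justified because levels never decrease in time, so such an edge can never certify level $\ell(v)$ for $v$). If an edge with $\ell(u')=\ell(v)-1$ is found it becomes the new tree edge into $v$ and the routine stops. If the list is exhausted with no such edge, I increment $\ell(v)$ (removing $v$ from $T$ and marking it ``far'' if $\ell(v)$ now exceeds $d$), reset $p(v)$ to the head of $v$'s incoming list, and enqueue $v$ itself together with every out-neighbour $w$ of $v$ whose tree edge is $(v,w)$ into a FIFO queue; dequeued vertices are processed the same way (running $\textsc{Fix}$ only on those that indeed lack a valid parent) until the queue empties. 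The process terminates since a vertex is reprocessed only after its own level has strictly increased.

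For correctness I would show by induction on the update step that $\ell$ is always the true BFS labelling up to depth $d$. The key structural fact is that every in-neighbour $u$ of a vertex $v$ with $\ell(v)=k<\infty$ has $\ell(u)\ge k-1$ (from $d_G(s,v)\le d_G(s,u)+1$); hence $\textsc{Fix}(v)$ either reconfirms $\ell(v)=k$ via a valid parent at level $k-1$, or correctly raises the distance to $\ge k+1$, after which the cascade to children propagates every forced increase. Monotonicity of distances under deletions guarantees that no level ever needs to drop, so the maintained tree is a correct shortest path tree to depth $d$ at every step, and a shortest path to any queried target at distance $\le d$ is obtained by following parent pointers.

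For the running time the crucial accounting observation is that $p(v)$ is reset only when $\ell(v)$ strictly increases, and $\ell(v)$ takes at most $d+1$ distinct values before $v$ becomes ``far''; thus $p(v)$ sweeps $v$'s incoming list at most $O(d)$ times, for a total of $O(d\cdot\deg^{\mathrm{in}}(v))$ work over all $\textsc{Fix}$ scans at $v$. Each reparenting notification that $v$ sends to a child is tied to one of the $O(d)$ increases of $\ell(v)$ and to $v$'s $\deg^{\mathrm{out}}(v)$ potential children, so $v$ sends $O(d\cdot\deg^{\mathrm{out}}(v))$ notifications over all time, each costing $O(1)$ plus a $\textsc{Fix}$ call already charged to the recipient; adding the $O(1)$ unlink cost of each of the at most $m$ deletions and summing over all vertices gives total update time $O(md+m)=O(md)$. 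The step I expect to need the most care is exactly this charging argument: one must check that resetting $p(v)$ after every level increase does not cause already-skipped edges to be re-traversed — which is precisely what the ``$\ell(u')\ge\ell(v)$'' skip rule together with monotonicity of levels rules out — and that the cascading reparentings cannot loop or exceed the per-vertex $O(d)$ level-increase budget.
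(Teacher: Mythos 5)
The paper states this lemma purely as a citation to Even--Shiloach \cite{shiloach1981line} (and its directed adaptations) and gives no proof of its own; your reconstruction is exactly the standard ES-tree argument for directed graphs --- level labels, in-neighbour parent search with a scan pointer, cascading level increases, and the ``at most $d$ sweeps of each in-adjacency list'' charging --- and it is correct. The only cosmetic wrinkle is the phrase ``permanently skipping'': since you reset $p(v)$ to the head of the list after each level increase, skipped edges are in fact re-examined later, but your final accounting (each in-list swept $O(d)$ times, each out-list charged $O(d)$ notifications) is the right one and gives $O(md)$.
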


\section{Maintaining Separators}\label{sec:Sep}
Lemma~\ref{Lem:DynSep} below provides a key tool used in all of our data structures. It gives an efficient data structure that maintains a growing separator set $S$ of small size in a decremental graph $G$ with the following guarantees. Let $s$ be a fixed vertex and let $d$ be some given threshold distance. Then at every time step, vertices reachable from $s$ in $G\setminus S$ are of distance slightly less than $d$ from $s$ in $G$. Conversely, for vertices $v$ not reachable from $s$ in $G\setminus S$, we have $d_G(s,v) = \Omega(d)$; furthermre, if $d_G(s,v)$ is larger than $d$ by some small constant factor then any shortest path $s\leadsto v$ in $G$ can be decomposed into $s\leadsto w\leadsto v$ such that $s\in S$, $d_G(s,w)\leq d$, and $d_G(w,v)\leq d$. In fact, the lemma states that $w$ can be chosen in $S^{t_0}$ where $t_0$ is the first time step in which $d_G(s,v)$ became (slightly) larger than $d$; note that this is a stronger statement since $S$ is growing over time. Before proving Lemma~\ref{Lem:DynSep}, we need the following well-known result.
\begin{Lem}\label{Lem:ThinLayer}
Given a directed unweighted $n$-vertex graph $G = (V,E)$, given $d_1,d_2\in\mathbb N_0$ with $d_2 - d_1 + 1\geq \lg n$ , and given vertices $u,v\in V$ with $d_G(u,v)\geq d_2$, a BFS tree in $G$ with root $u$ contains a layer $L\subseteq V$ with $d_1\leq d_G(u,L)\leq d_2$ and $|L|\leq |L_-|\lg n/(d_2-d_1+1)$ where $L_- = \{w\in V\vert d_G(u,w) < d_G(u,L)\}$ is the union of layers closer to $u$ than $L$.
\end{Lem}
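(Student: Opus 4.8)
The plan is to prove Lemma~\ref{Lem:ThinLayer} by the classical ball-growing (region-growing) argument: if \emph{every} BFS layer at a distance in the window $[d_1,d_2]$ were ``fat'', then the ball around $u$ would grow by a constant multiplicative factor across each of the $d_2-d_1+1\ge\lg n$ consecutive layers, and would therefore already contain more than $n$ vertices — impossible. For $j\in\mathbb{N}_0$ write $L_j=\{w\in V: d_G(u,w)=j\}$ for the $j$-th BFS layer and $B_j=\{w: d_G(u,w)\le j\}=\bigcup_{k\le j}L_k$ for the ball of radius $j$, so that $B_j$ is the disjoint union $B_{j-1}\cup L_j$. Since $d_G(u,v)\ge d_2$ with $v$ reachable from $u$ (we may assume reachability, as otherwise the statement is vacuous), a shortest $u\leadsto v$ path has its $j$-th vertex in $L_j$ for every $j\le d_2$; hence $L_{d_1},\dots,L_{d_2}$ are all nonempty and the BFS tree genuinely contains a layer at every distance in $[d_1,d_2]$. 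We also note $|B_{d_2}|\le n$ and, assuming $d_1\ge 1$, that $B_{d_1-1}\ne\emptyset$ since it contains $u$.

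Write $b=d_2-d_1+1$ and suppose, for contradiction, that no layer $L$ with $d_1\le d_G(u,L)\le d_2$ satisfies $|L|\le|L_-|\,\lg n/b$. Applied to $L=L_j$ (for which $L_-=B_{j-1}$) this says $|L_j|>|B_{j-1}|\,\lg n/b$ for every $j\in\{d_1,\dots,d_2\}$, whence $|B_j|=|B_{j-1}|+|L_j|>|B_{j-1}|\bigl(1+\lg n/b\bigr)$. Taking the product over $j=d_1,\dots,d_2$ of the ratios $|B_j|/|B_{j-1}|$ gives $|B_{d_2}|/|B_{d_1-1}|>\bigl(1+\lg n/b\bigr)^{b}$, and since $|B_{d_1-1}|\ge 1$ we obtain $|B_{d_2}|>\bigl(1+\lg n/b\bigr)^{b}$.

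It remains to verify $\bigl(1+\lg n/b\bigr)^{b}\ge n$, which contradicts $|B_{d_2}|\le n$. Set $x=\lg n/b$; the hypothesis $b\ge\lg n$ gives $x\in(0,1]$. Since $t\mapsto\log_2(1+t)$ is concave on $[0,1]$ and agrees with the linear map $t\mapsto t$ at both $t=0$ and $t=1$, it lies on or above that line, so $\log_2(1+x)\ge x$. Therefore $b\log_2\bigl(1+\lg n/b\bigr)\ge b\cdot(\lg n/b)=\lg n$, i.e.\ $\bigl(1+\lg n/b\bigr)^{b}\ge n$, completing the contradiction; hence the desired thin layer exists.

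The main point requiring care — and the reason I expect it to be the only real obstacle — is the sharpness of the constant: the conclusion asks for ratio exactly $\lg n/(d_2-d_1+1)$, and the estimate $\log_2(1+x)\ge x$ used above is valid only for $x\le 1$, which is precisely where the window-length hypothesis $d_2-d_1+1\ge\lg n$ is spent. A secondary nuisance is the degenerate case $d_1=0$, where $B_{d_1-1}=\emptyset$ and the telescoping product must instead be started from $B_0=\{u\}$ (losing one factor); this is handled by a short separate computation, and in any event the layer $L_0=\{u\}$ can never be the sought thin layer since $|L_0|=1>0=|L_-|\,\lg n/b$. Everything else is routine bookkeeping about BFS layers.
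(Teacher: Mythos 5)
Your proof is correct and follows essentially the same argument as the paper's: assume for contradiction that every layer in the window $[d_1,d_2]$ is fat, conclude that the ball around $u$ grows by a factor $1+\lg n/(d_2-d_1+1)$ at each of the $d_2-d_1+1\ge\lg n$ layers, and derive $|B_{d_2}|>n$ via the elementary inequality $(1+1/q)^q\ge 2$ for $q\ge 1$ (which is exactly your $\log_2(1+x)\ge x$ for $x\le 1$ in reciprocal notation). Your flagging of the degenerate case $d_1=0$ is a point the paper silently glosses over as well (its telescoping likewise needs $|L_{<d_1}|\ge 1$), and it is immaterial to the applications, where $d_1$ is always $\Omega(d)$.
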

\begin{proof}
Denote by $L_i$ the $i$th layer of the BFS tree from $u$. For each $i$, let $L_{{}<i} = \cup_{j < i}L_j$. Let $q = (d_2-d_1 + 1)/\lg n$. Assume for contradiction that $L$ does not exist. Then for $i = d_1,\ldots,d_2$, $|L_i| > |L_{{}<i}|/q$ so $|L_{{}<i+1}| = |L_i| + |L_{{}<i}| > (1+1/q)|L_{{}<i}|$. Since $q\geq 1$, we have $(1+1/q)^q\geq 2$ so
\[
  |L_{{}<d_2+1}| > (1+1/q)^{d_2 - d_1 + 1}|L_{{}<d_1}|\geq 2^{(d_2-d_1 + 1)/q} = n,
\]
contradicting that there are only $n$ vertices in $G$.
\end{proof}

\begin{Lem}\label{Lem:DynSep}
Given a directed unweighted $n$-vertex graph $G = (V,E)$ undergoing a sequence of edge deletions, a source $s\in V$, and a value $d\in\mathbb N$ with $d > 33\lg n$. Let $\mathcal O$ be a data structure that maintains for each $v\in V$ a distance estimate $\tilde d(s,v)\geq d_G(s,v)$ such that if $d_G(s,v)\leq d$ then $\tilde d(s,v)\leq \frac 4 3 d_G(s,v)$. Whenever an estimate $\tilde d(s,v)$ grows to a value of at least $\frac{32}{33}d$, $\mathcal O$ outputs $v$. Then there is a data structure $\mathcal D$ with access to $\mathcal O$ which maintains a growing set $S\subseteq V$ such that for each $v\in V$,
\begin{enumerate}
\item at the end of each update, if $v$ is reachable from $s$ in $G\setminus S$ then $d_G(s,v) < \frac{32}{33}d$ and otherwise $d_G(s,v) > \frac 2 3 d$,
\item if $t_0$ is a time step in which $d < d_G^{(t_0)}(s,v) \leq \frac{34}{33}d$ then for every time step $t_1\geq t_0$ in which $d_G^{(t_1)}(s,v) \leq \frac{34}{33}d$, any shortest $s$-to-$v$ path $P$ in $G^{(t_1)}$ intersects $S^{(t_0)}$ and for the first such intersection vertex $w$ along $P$, $d_G^{(t_1)}(s,w)\leq d$, and $d_G^{(t_1)}(w,v)\leq d$.
\end{enumerate}
At any time, $|S| = O(n\log n/d)$ and $\mathcal D$ has total update time $O(m)$, excluding the time spent by $\mathcal O$.

\end{Lem}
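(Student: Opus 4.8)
The plan is to maintain $S$ lazily, only reacting when $\mathcal{O}$ reports some vertex $v$ (i.e.\ when $\tilde{d}(s,v)$ first reaches $\tfrac{32}{33}d$). Since $\tilde{d}(s,v) \le \tfrac43 d_G(s,v)$ whenever $d_G(s,v) \le d$, the report guarantees $d_G(s,v) > \tfrac34 \cdot \tfrac{32}{33}d = \tfrac{8}{11}d > \tfrac23 d$ at the moment of reporting, and monotonicity of distances under deletions keeps this true forever after. So the reported vertex is genuinely ``far'', and we want to cut it off from $s$ by adding a thin BFS layer to $S$. Concretely, upon a report for $v$, I would run two BFS explorations \emph{in parallel, layer by layer}: a forward BFS from $s$ in $G \setminus S$ (over the not-yet-cut-off part of the graph), and a backward BFS from $v$ in $(G\setminus S)_{\mathbf{rev}}$. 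Each search is run for at most $\Theta(\lg n)$ layers; I stop the forward search at the first layer $L$ that is ``thin'' in the sense of Lemma~\ref{Lem:ThinLayer} (number of vertices in $L$ at most $\lg n / \Theta(1)$ times the number strictly closer to $s$), add $L$ to $S$, mark every vertex strictly farther from $s$ than $L$ as cut off, and symmetrically for the backward search from $v$. Lemma~\ref{Lem:ThinLayer}, applied with $d_1, d_2$ spanning a window of length $\ge \lg n$ around $\tfrac12 d$ (this is where $d > 33\lg n$ is used, to have room for a window of $\lg n$ layers strictly inside $(0,d)$), guarantees such a thin layer exists at depth $O(\lg n \cdot \text{window}) \ll d$, so the two separator layers sit at distance $< d$ from $s$ and from $v$ respectively, and together they separate $s$ from $v$ in $G\setminus S$.

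The size bound $|S| = O(n\lg n / d)$ follows by a global charging argument: whenever we add a thin layer, its size is at most $O(\lg n / d)$ times the number of vertices cut off, and cut-off vertices are permanently removed from further searches (distances only increase, so a vertex once separated from $s$ stays separated). Summing over all additions, $|S| \le O(\lg n / d) \cdot n$. The running time bound $O(m)$ comes from the same observation: a BFS layer exploration touches edges incident to vertices that are either on the added thin layer or cut off; we charge the exploration cost to $\sum \deg$ over the cut-off vertices (plus the thin layer, which is smaller), and since each vertex is cut off at most once, the total is $O(m)$. The parallel layer-by-layer execution is what makes this charging valid — we never explore deeper than the thin layer we eventually commit to, so we don't waste work on the ``far'' side that we fail to cut.

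For invariant~(1): at the end of an update, if $v$ is reachable from $s$ in $G \setminus S$ then $\mathcal{O}$ has not reported $v$ since the last time $S$ was updated (or it was reported and we just cut it off — but then it's not reachable), so $\tilde d(s,v) < \tfrac{32}{33}d$ and hence $d_G(s,v) \le \tilde d(s,v) < \tfrac{32}{33}d$; if $v$ is not reachable, it was cut off at some point, which only happens to vertices strictly beyond a thin layer, and as argued such vertices have $d_G(s,v) > \tfrac23 d$ (preserved under deletions). Invariant~(2) is the delicate part and I expect it to be the main obstacle. The point is that $S^{(t_0)}$ — the separator frozen at the moment $d_G(s,v)$ first exceeds $d$ — already separates $s$ from $v$: by $t_0$ the estimate $\tilde d(s,v)$ has reached at least $d \ge \tfrac{32}{33}d$, so $\mathcal{O}$ has reported $v$ and $\mathcal{D}$ has reacted, placing into $S$ a thin forward layer $L_s$ at distance $< d$ from $s$ and a thin backward layer $L_v$ at distance $< d$ (in $G_{\mathbf{rev}}$) from $v$, with $s$ and $v$ on opposite sides. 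Then for any later $t_1 \ge t_0$ with $d_G^{(t_1)}(s,v) \le \tfrac{34}{33}d$, any shortest $s$–$v$ path $P$ in $G^{(t_1)}$ must cross $S^{(t_0)} \subseteq S^{(t_1)}$; let $w$ be the first crossing vertex. I need $d_G^{(t_1)}(s,w) \le d$ and $d_G^{(t_1)}(w,v)\le d$. The first holds because the prefix of $P$ up to $w$ avoids $S^{(t_0)}$ except at $w$, so $w$ lies in (the closure of) the $s$-side, hence $w$ is on or before layer $L_s$, which sits at distance $< d$ from $s$ at time $t_0$; but distances only grow, so I must instead argue via the layer depth being $O(\lg n) \cdot (\text{window around } \tfrac12 d)$ and bound that window so the depth stays $\le d$ even allowing for the fact that we measure in $G^{(t_0)}$ — and then $d_G^{(t_1)}(s,w) \le d_G^{(t_1)}(s,w)$ along $P$'s prefix is at most $d_G^{(t_1)}(s,v) \le \tfrac{34}{33}d$, which is not quite $\le d$, so the real argument is that $w$ being on the near side of the thin layer forces $d_G^{(t_0)}(s,w) \le d$ and then one shows the subpath $s \leadsto w$ of the shortest $s$–$v$ path at time $t_1$ cannot have grown past $d$ because... this is precisely the subtle monotonicity/snapshot bookkeeping the lemma is built to support, and carefully tracking which quantities are measured at $t_0$ versus $t_1$ is where the proof must be most careful. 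The constant $\tfrac{34}{33}$ (versus $\tfrac{32}{33}$) gives exactly the slack needed: $\tfrac{34}{33}d - d = \tfrac{1}{33}d$, comparable to the $O(\lg n)$ layer depth once $d > 33 \lg n$, which is why that hypothesis appears.
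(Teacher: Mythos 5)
Your overall architecture matches the paper's: react only when $\mathcal O$ reports, grow two BFS searches in parallel (forward from $s$, backward from $v$ in the reversed graph), stop at a thin layer via Lemma~\ref{Lem:ThinLayer}, and charge sizes and work to permanently cut-off vertices. However, there are three concrete gaps. First, your window placement does not support part~1. If the thin layers sit ``around $\tfrac12 d$'' from $s$ and from $v$, then a vertex cut off by the backward layer can have $d_G(s,\cdot)$ as small as roughly $d_G(s,v)-\tfrac12 d\geq \tfrac{8}{11}d-\tfrac12 d$, far below the required $\tfrac23 d$; moreover, since $d_G(s,v)$ may be only $\tfrac{8}{11}d$ at the moment of the report, two windows around $\tfrac12 d$ can overlap, destroying the disjointness of the two searches that your charging relies on. The paper instead uses asymmetric windows, $(\tfrac23 d,(\tfrac23+\tfrac1{33})d]$ from $s$ and $[0,\tfrac1{33}d)$ from $v$, and verifies disjointness from $d_G(s,v)\geq(\tfrac23+\tfrac2{33})d$; this is also where $d>33\lg n$ enters (each window has $>\lg n$ layers). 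Second, your size-bound charging is unjustified for the \emph{forward} layer: Lemma~\ref{Lem:ThinLayer} bounds $|L_s|$ by $O(\lg n/d)$ times the number of vertices \emph{closer to $s$}, and those vertices are exactly the ones that are \emph{not} cut off and can be re-visited in later separator computations. The paper transfers this bound to the cut-off side by running the two searches in parallel measured in \emph{degree-weighted time} (each vertex $w$ consumes $\hat d(w)\geq\lceil m/n\rceil$ time), so that whichever layer is found first, its size is $O(\log n/d)\cdot\sum_{w \text{ cut off}}\hat d(w)/\lceil m/n\rceil$; a plain layer-by-layer lockstep does not equate the two sides' vertex counts. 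Relatedly, only the first of the two layers is added to $S$, and you cannot afford to ``mark every vertex strictly farther from $s$ than $L$'' explicitly.

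Third, and most importantly, part~2 is left incomplete exactly where the lemma has content, as you acknowledge. The missing argument is short but specific. The prefix of $P$ before $w$ survives in $G^{(t_1)}\subseteq G^{(t_0)}$ and avoids $S^{(t_0)}$, so the predecessor $w'$ of $w$ is reachable from $s$ in $G^{(t_0)}\setminus S^{(t_0)}$; part~1 then sandwiches $\tfrac23 d< d_G^{(t_0)}(s,w)< \tfrac{32}{33}d+1$. Monotonicity gives $d_G^{(t_1)}(s,w)>\tfrac23 d$, hence $d_G^{(t_1)}(w,v)=d_G^{(t_1)}(s,v)-d_G^{(t_1)}(s,w)<\tfrac{34}{33}d-\tfrac23 d<d$, which is the second inequality. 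For the first, suppose $d_G^{(t_1)}(s,w)\geq d+1$; then $d_G(s,w)$ grew by more than $\tfrac1{33}d$ between $t_0$ and $t_1$, and combining $d_G^{(t_0)}(s,v)\leq d_G^{(t_0)}(s,w)+d_G^{(t_0)}(w,v)$ with $d_G^{(t_0)}(w,v)\leq d_G^{(t_1)}(w,v)$ yields $d_G^{(t_1)}(s,v)-d_G^{(t_0)}(s,v)>\tfrac1{33}d$, contradicting $d<d_G^{(t_0)}(s,v)\leq d_G^{(t_1)}(s,v)\leq\tfrac{34}{33}d$. Your instinct that the $\tfrac1{33}d$ slack between $d$ and $\tfrac{34}{33}d$ is what makes this work is right, but the proposal as written does not close the argument.
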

\begin{proof}
Let $\epsilon = \frac 1{33}$. For each $v\in V$, let $\hat d(v)$ be obtained from the degree of $v$ in the initial graph $G$ by rounding up to the nearest multiple of $\Delta = \lceil m/n\rceil$. In the description of $\mathcal D$ below, processing one edge takes at most one unit of time.

Data structure $\mathcal D$ initializes $S = \emptyset$ and unmarks all vertices of $V$. Whenever $\mathcal O$ outputs an unmarked vertex $v$, $\mathcal D$ runs a modified BFS from $s$ in $G_S = G\setminus S$ which for each vertex $w$ spends $\hat d(w)$ time to process its outgoing edges; this can always be achieved by busy-waiting at $w$ if needed. In parallel, $\mathcal D$ runs a similar modified BFS from $v$ in $G_S' = (G\setminus S)_{\mathbf{rev}}$. The search from $s$ halts if a layer $L_s$ is found such that $\frac 2 3 d < d_{G_S}(s,L_s)\leq (\frac 2 3 + \epsilon)d$ and $|L_s| = O((x\log n)/d)$ where $x$ is the number of vertices visited by the search, excluding $L_s$. Similarly, the search from $v$ halts if a layer $L_v$ is found such that $d_{G_S'}(v,L_v) < \epsilon d$ and $|L_v| = O((y\log n)/d)$ where $y$ is the number of vertices visited by the search excluding $L_v$. Let $L$ be the first of the two layers found. $\mathcal D$ halts both searches when $L$ is found. Then $L$ is added to $S$. The existence of $L$ follows from Lemma~\ref{Lem:ThinLayer} which applies since by assumption, $\epsilon d > \lg n$.

Observe that when $\mathcal O$ outputs $v$, we have $d_G(s,v)\geq (1-\epsilon)d/(4/3) = (\frac 2 3 + 2\epsilon)d$ as otherwise, $\tilde d(s,v) < 1 - \epsilon = \frac{32}{33}d$. This shows the existence of $L_s$ and $L_v$ and that no edge is visited by both searches. We have $d_{G_S}(s,v)\geq d_G(s,v)\geq (\frac 2 3 + 2\epsilon)d$ and $d_{G_S}(s,L_s) > \frac 2 3 d$ and for every $w\in L_v$,
\begin{align*}
  d_{G_S}(s,w) &\geq d_{G_S}(s,v) - d_{G_S}(w,v) \\
  &\geq \left(\frac 2 3 + 2\epsilon \right)d - d_{G_S'}(v,w) \\
  &= \left(\frac 2 3 + 2\epsilon\right)d - d_{G_S'}(v,L_v) \\
  &> \left(\frac 2 3 + \epsilon \right)d,
\end{align*}
implying that $d_{G_S}(s,L_v) > (\frac 2 3 + \epsilon)d$. It follows that $d_{G_S}(s,L) = \min\{d_{G_S}(s,L_s),d_{G_S}(s,L_v)\} > \frac 2 3 d$.

\paragraph{Showing part $1$:}
Let $v\in V$ and consider any point during the sequence of updates. Assume first that $v$ is reachable from $s$ in $G_S$. Then $\mathcal O$ has not yet output $v$ (otherwise, the above procedure separates $v$ from $s$ with $S$) so $d_G(s,v)\leq\tilde d(s,v) < \frac{32}{33}d$, as desired.

Now, assume that $v$ is not reachable from $s$ in $G_S$. We may assume that there is a shortest path $P$ from $s$ to $v$ in $G$ since otherwise $d_G(s,v) = \infty > \frac 2 3 d$. Let $w$ be the first vertex of $S$ along $P$. It suffices to show that $|P| > \frac 2 3 d$. At some earlier point in time, the procedure added $w$ to $S$; just prior to this, $P$ was contained in $G_S$ so from the above $|P| > \frac 2 3 d$, as desired.

\paragraph{Showing part $2$:}
Let $t_0\leq t_1$ satisfy the second part of the lemma. Since $d_G^{(t_0)}(s,v) > d$ by assumption, the first part of the lemma implies that $v$ is not reachable from $s$ in $G_S^{(t_0)}$ and hence $v$ is also not reachable from $s$ in $G_S^{(t_1)}$.

Let $P$ be a shortest path from $s$ to $v$ in $G^{(t_1)}$. From what we have just shown, $P$ must intersect $S^{(t_0)}$. Let $w$ be the first vertex of $S^{(t_0)}$ along $P$. Then clearly, $d_G^{(t_1)}(s,v) = d_G^{(t_1)}(s,w) + d_G^{(t_1)}(w,v)$. Since the vertex $w'$ preceding $w$ on $P$ is reachable from $s$ in $G_S^{(t_0)}$, the first part of the lemma implies that $d_G^{(t_0)}(s,w)\leq d_G^{(t_0)}(s,w') + 1 < \frac{32}{33}d + 1$ and $d_G^{(t_0)}(s,w) > \frac 2 3 d$. The latter implies that $d_G^{(t_1)}(w,v) = d_G^{(t_1)}(s,v) - d_G^{(t_1)}(s,w)\leq \frac{34}{33}d - d_G^{(t_0)}(s,w) < \frac{34}{33}d - \frac 2 3 d < d$, showing one of the two inequalities in the second part of the lemma.


We show the other inequality by contradiction so assume that $d_G^{(t_1)}(s,w) > d$. Then $d_G^{(t_1)}(s,w)\geq d+1$ so by the above $d_G(s,w)$ would have increased by more than $d+1 - (\frac{32}{33}d + 1) = \frac 1{33}d$ from time step $t_0$ to $t_1$. Combining this with $d_G^{(t_1)}(s,v) = d_G^{(t_1)}(s,w) + d_G^{(t_1)}(w,v)$, $d_G^{(t_0)}(w,v)\leq d_G^{(t_1)}(w,v)$, and the triangle inequality, we get
\[
d_G^{(t_1)}(s,v) - d_G^{(t_0)}(s,v)\geq d_G^{(t_1)}(s,w) + d_G^{(t_1)}(w,v) - (d_G^{(t_0)}(s,w) + d_G^{(t_0)}(w,v)) > \frac 1{33}d
\]
This contradicts the assumption $d < d_G^{(t_0)}(s,v)\leq d_G^{(t_1)}(s,v) \leq \frac{34}{33}d$. We conclude that $d_G^{(t_1)}(s,w)\leq d$ and $d_G^{(t_1)}(w,v)\leq d$ which shows the second part of the lemma.

\paragraph{Bounding $|S|$ and running time:}
To bound, $|S|$, consider the two parallel searches from $s$ and from $v$, respectively, in some update. As argued earlier, there cannot be an edge visited by both searches. Let $X$ resp.~$Y$ be the set of vertices visited by the BFS from $s$ resp.~$v$, excluding $L_s$ resp.~$L_v$ and let $x = |X|$ and $y = |Y$.

Assume first that $L = L_s$. Then all vertices in $Y\cup L_v$ become unreachable in $G_S$ once $L$ has been added to $S$. Since $\hat d(w)/\Delta\geq 1$ for each $w\in V$, since each BFS spends $\hat d(w)$ time to process edges incident to each vertex $w$, and since the two searches run in parallel, we have
\[
  |L| = O((x\log n)/d) = O\left(\frac{\log n}d\sum_{w\in X}\frac{\hat d(w)}{\Delta}\right) = O\left(\frac{\log n}d\sum_{w\in Y\cup L_v}\frac{\hat d(w)}{\Delta} \right)
\]

Now, assume that $L = L_v$. Then all vertices of $Y\cup L_v$ become unreachable in $G_S$ once $L$ has been added to $S$ so again,
\[
  |L| = O((y\log n/d) = O\left(\frac{\log n}d\sum_{w\in Y\cup L_v}\frac{\hat d(w)}{\Delta}\right)
\]

In both cases, $|L|$ can be paid for by charging each vertex $w$ no longer reachable from $s$ in $G_S$ a cost of $O(\frac{\log n}d\hat d(w)/\Delta)$. Since a vertex is only charged once during the course of the algorithm, we get that for the final separator $S$ (and hence for each intermediate separator)
\begin{align*}
  |S| &= O\left(\frac{\log n}d\sum_{w\in V}\frac{\hat d(w)}{\Delta}\right) \\
  &= O\left(\frac{\log n}d\sum_{w\in V}\frac{d(w) + \Delta}{\Delta}\right) \\
  &= O\left(\frac{\log n(m + n\lceil m/n\rceil)}{d\lceil m/n\rceil}\right) \\
  &= O\left(\frac{n\log n}d\right)
\end{align*}
where the last bound follows since we may assume that all vertices are initially reachable from $s$ in $G$, implying $m\geq n - 1$ and hence $\lceil m/n\rceil = \Theta(m/n)$. This shows the desired bound on $|S|$.



The running time cost of any two parallel searches can be charged to the total degree of the vertices that become unreachable from $s$ in $G_S$ after extending $S$ with $L$. This shows that the total running time of parallel searches over all updates is $O(m)$, as desired.
\end{proof}

\section{Deterministic Decremental APSP}\label{sec:Det}
In this section, we present our deterministic data structures for the exact resp.~$(1+\epsilon)$-approximate decremental APSP problem and show Theorems~\ref{Thm:DetExact} and~\ref{Thm:DetApprox}. In the following, let $G = (V,E)$ denote the decremental graph.

\subsection{Exact distances}\label{subsec:DetExact}
Let $\rho = \frac{34}{33}$ and $D_i = \rho^i$ for $i = 0,\ldots,\lfloor\log_{\rho}n\rfloor$. For each $i$ and each $u\in V$, we give a data structure $\mathcal D_i(u)$ which for any query vertex $v$ maintains a value $\tilde d_i(u,v)\geq d_G(u,v)$ with equality if $d_G(u,v)\in(D_i,D_{i+1}]$. In each update, these data structures will be updated in order of increasing $i$.

Handling all-pairs shortest path distances up to at most $33\lg n$ can be done in $O(mn\log n)$ using the data structure of Even and Shiloach so we only consider $i$ such that $D_i\geq 33\lg n$. This allows us to apply Lemma~\ref{Lem:DynSep}. Consider such an $i$ and assume that we already have data structures for all values smaller than $i$.

Data structure $\mathcal D_i(u)$ maintains a separator set $S_i(u)$ using an instance $\mathcal S_i(u)$ of the data structure of Lemma~\ref{Lem:DynSep} with $s = u$, $d = D_i$, and with $\mathcal D_{i-1}(u)$ playing the role of $\mathcal O$. At the beginning of each update, $\mathcal S_i(u)$ updates $S_i(u)$. Then for each $v$, if $\mathcal O$ reports that $\tilde d(u,v)$ has increased from a value of at most $D_i$ to a value strictly greater than $D_i$, $\mathcal D_i(u)$ sets $S_i(u,v)$ equal to the current separator set $S_i(u)$; $\mathcal D_i(u)$ then sets up a priority queue $Q_i(u,v)$ where elements are all $s\in S_i(u,v)$ with corresponding keys $\tilde d_{i-1}(u,s) + \tilde d_{i-1}(s,v)$. During updates, whenever $\mathcal D_{i-1}(u)$ resp.~$\mathcal D_{i-1}(s)$ reports that $\tilde d_{i-1}(u,s)$ resp.~$\tilde d_{i-1}(s,v)$ increases, the key value of $s$ in $Q_i(u,v)$ increases by the same amount.

For each vertex $v$, $\mathcal D_i(u)$ maintains $\tilde d_i(u,v)$ as the min key value in $Q_i(u,v)$. This completes the description of each structure $\mathcal D_i(u)$.


The overall data structure $\mathcal D$ maintains a priority queue $Q(u,v)$ for each vertex pair $(u,v)$ with an element for each $i$ with key value $\tilde d_i(u,v)$. For $i$ in increasing order, $\mathcal D$ updates $\mathcal D_i(u)$ for each $u$. Whenever a data structure $\mathcal D_i(u)$ increases a value $\tilde d_i(u,v)$, the corresponding key in $Q(u,v)$ is increased accordingly. On a query $(u,v)$, $\mathcal D$ reports the min key value in $Q(u,v)$.

\paragraph{Correctness:}
Consider a vertex pair $(u,v)$ at any time step $t_1$ in the sequence of edge deletions. If $d_G^{(t_1)}(u,v) = \infty$ then correctness is clear so assume otherwise and pick $i$ such that $d_G^{(t_1)}(u,v)\in (D_i,D_{i+1}]$ where $D_i\geq 33\lg n$. Let $t_0\leq t_1$ be the first time step such that $d_G^{(t_0)}(u,v)\in (D_i,D_{i+1}]$. Note that $S_i(u,v) = S_i(u)^{(t_0)}$. By the second part of Lemma~\ref{Lem:DynSep} combined with the observation that no key value in $Q_i(u,v)$ is below $d_G(u,v)$, it follows that the min key value in $Q_i(u,v)$ equals $d_G^{(t_1)}(u,v)$. This shows correctness.

\paragraph{Running time:}
Consider an $i\in\{0,\ldots,\lfloor\log_{\rho}n\rfloor\}$ with $D_i\geq 33\lg n$ and a vertex $u\in V$. We will show that maintaining $\mathcal D_i(u)$ takes $O(n^2\log^2n)$ time using a standard binary heap. Total time over all $i$ and $u$ will thus be $O(n^3\log^3n)$. This dominates the $O(n^3\log^2n)$ time to maintain priority queues $Q(u,v)$ and the $O(mn\log n)$ time for the data structure of Even and Shiloach for small values of $i$.

Maintaining $S_i(u)$ takes a total of $O(m)$ time by Lemma~\ref{Lem:DynSep}. The total number of elements in priority queues $Q_i(u,v)$ over all $v\in S_i(u,v)$ is $O(n^2\log n/D_i)$, again by Lemma~\ref{Lem:DynSep}. The number of increase-key operations for a single priority queue element $s$ of $Q_i(u,v)$ is $O(D_i)$ which takes a total of $O(D_i\log n)$ time. Over all elements of priority queues $Q_i(u,v)$, this is $O(n^2\log^2n)$.

\paragraph{Lower bound:}
We show that any data structure that explicitly maintains the distance matrix of $G$ during the sequence of deletions must use $\Omega(n^3)$ time.

Let the initial graph $G$ consist of a simple path $v_1\rightarrow v_2\rightarrow\cdots\rightarrow v_n$ augmented with edges $e_i = (v_i,v_{i+2})$ for $i = 1,3,5,\ldots,n$ (assuming $n$ is odd; otherwise, $i = 1,3,5,\ldots,n-1$). Deleting the edges not on the simple path in any order, say, by increasing index, results in $\Theta(n^2)$ vertex pairs each increasing their pairwise distance $\Theta(n)$ times. Hence, there are $\Omega(n^3)$ changes to the distance matrix, showing the lower bound.

Note that our choice of $G$ for the lower bound is sparse; it is straightforward to extend the above to any edge density: simply take the above graph and arbitrarily insert additional edges to reach the desired density. Then consider a sequence starting with the deletion of these additional edges followed by the sequence above.

\paragraph{Reporting paths:}
It is easy to extend our data structure to efficiently answer queries for shortest paths (rather than shortest path distances) between any vertex pair $(u,v)$. Associated with the min element of $Q(u,v)$ is a vertex $s$ such that for the associated index $i$, $\tilde d_i(u,v) = d_G(u,v) = d_G(u,s) + d_G(s,v)$, $\tilde d_{i-1}(u,s) = d_G(u,s)$, and $\tilde d_{i-1}(s,v) = d_G(s,v)$. Hence, by recursively querying for pairs $(u,s)$ and $(s,v)$, we get a shortest $u$-to-$v$ path in $G$ in time proportional to its length.

We have shown our first main result, Theorem~\ref{Thm:DetExact}.

\subsection{Approximate distances}\label{subsec:DetApprox}
Let $\epsilon > 0$ be given. We now present our deterministic data structure for the $(1+\epsilon)$-approximate variant of the problem.

The data structure is quite similar to the one for the exact variant so we only describe the changes needed. For $i > 0$ and $u\in V$, we describe data structure $\mathcal D_i(u)$ and assume that we have data structures for values less than $i$. As before, we only consider $i$ with $D_i\geq 33\lg n$.

Let $\epsilon' > 0$ be a value depending on $\epsilon$ such that $(1+\epsilon')^c = \rho$ for some $c\in\mathbb N$; we will specify $\epsilon'$ later. For $j = 0,\ldots,c = \log_{1+\epsilon'}\rho$, let $d_{i,j} = D_i(1+\epsilon')^j$. This partitions each interval $(D_i,D_{i+1}]$ into $c$ sub-intervals $(D_i(1+\epsilon')^j,D_i(1+\epsilon')^{j+1}]$ for $j = 0,\ldots,c-1$.

$\mathcal D_i(u)$ maintains $S_i(u)$ as in the exact version. For each $v\in V$, $\mathcal D_i(u)$ maintains an initially empty set $S_i(u,v)$. Once $\mathcal D_{i-1}(u)$ reports that $\tilde d_{i-1}(u,v)$ increased from a value of at most $D_i(1+\epsilon')^i$ to a value strictly greater than $D_i(1+\epsilon')^i$, $\mathcal D_i(u)$ sets $S_i(u,v)$ equal to the current set $S_i(u)$.

 For each $j = 0,\ldots,c-1$, a data structure $\mathcal D_{i,j}(u)$ maintains approximate distances $\tilde d_{i,j}(u,v)$ for each $v$ as follows. The following set is maintained:
 
 $$Q_{i,j}(u,v) = \left\{ s \in S_i(u, v) \mid \tilde d_{i-1}(u,s) + \tilde d_{i-1}(s,v) \leq (1+\epsilon')^id_{i,j} \right\}$$
 
 For ease of analysis, $Q_{i,j}(u,v)$ is maintained as a queue in which every $s \in Q_{i,j}(u,v)$ has key $\tilde d_{i-1}(u,s) + \tilde d_{i-1}(s,v)$ and is removed from $Q_{i,j}(u,v)$ (or increased to $\infty$) when this value exceeds  $(1+\epsilon')^id_{i,j}$.
 
 For each vertex $v$, $\tilde d_{i,j}(u,v) = (1+\epsilon')^id_{i,j}$ if $Q_{i,j}(u,v)$ contains at least one element and otherwise $\tilde d_{i,j}(u,v) = \infty$.


Data structure $\mathcal D_i(u)$ maintains for each $v$ a min-priority queue $Q_i(u)$ with an element of key value $\tilde d_{i,j}(u,v)$ for each $j$. On query $v$, it outputs $\tilde d_i(u,v) = \min\{k,\tilde d_{i-1}(u,v)\}$ where $k$ is the min-key of this queue, i.e., $\tilde d_i(u,v) = \min\{\tilde d_{i-1}(u,v),\min_j \tilde d_{i,j}(u,v)\}$.

The overall data structure $\mathcal D$ works in the same manner as for the exact data structure. 

\paragraph{Correctness:}
Consider any point during the sequence of edge deletions. We will show that for suitable choice of $\epsilon'$, the estimate $\tilde d(u,v)$ that $\mathcal D$ outputs satisfies $d_G(u,v)\leq\tilde d(u,v)\leq (1+\epsilon)d_G(u,v)$ for every vertex pair $(u,v)$.

We first show that $d_G(u,v)\leq\tilde d(u,v)$. It suffices to prove by induction on $i\geq 0$ that $d_G(u,v)\leq\tilde d_i(u,v)$. The proof holds for small $i$ such that $D_i < 33\lg n$ since then we use the data structure of Even and Shiloach, implying $\tilde d_i(u,v) = d_G(u,v)$. Now, consider an $i$ such that $D_i \geq 33\lg n$ and assume that the claim holds for smaller values than $i$. Since $\tilde d_i(u,v) = \min\{\tilde d_{i-1}(u,v),\min_j \tilde d_{i,j}(u,v)\}$ and $\tilde d_{i,j}(u,v)\geq (1+\epsilon')^id_{i,j}\geq\tilde d_{i-1}(u,s) + \tilde d_{i-1}(s,v)$, the induction hypothesis implies $\tilde d_i(u,v)\geq d_G(u,v)$, showing the induction step. Thus, $d_G(u,v)\leq\tilde d(u,v)$.

To show that $\tilde d(u,v)\leq (1+\epsilon)d_G(u,v)$, we prove by induction on $i\geq 0$ that during all updates and for all vertex pairs $(u,v)$, if $d_G(u,v)\in (0,D_{i+1}]$ then $\tilde d_i(u,v)\leq (1+\epsilon')^id_G(u,v)$. If we can show this then picking $\epsilon' \leq \ln(1+\epsilon)/(\lfloor\log_{\rho}n\rfloor)$ gives $\tilde d(u,v)\leq (1+\epsilon')^{\lfloor\log_\rho n\rfloor}d_G(u,v)\leq e^{\epsilon'\lfloor\log_\rho n\rfloor}d_G(u,v)\leq (1+\epsilon)d_G(u,v)$ for every vertex pair $(u,v)$.

We only need to consider $i$ with $D_i\geq 33\lg n$ since otherwise, we use the data structure of Even and Shiloach. Assume inductively that the claim holds for values less than $i$.

Let $t_1$ be the current time step and consider a vertex pair $(u,v)$ with $d_G^{(t_1)}(u,v)\in (0,D_{i+1}]$. By the induction hypothesis, we may assume that $d_G^{(t_1)}(u,v)\in (D_i,D_{i+1}]$. We may further assume that $\tilde d_{i-1}^{(t_1)}(u,v) > D_i(1+\epsilon')^i$ since otherwise,
\[
\tilde d_i^{(t_1)}(u,v)\leq \tilde d_{i-1}^{(t_1)}(u,v)\leq D_i(1+\epsilon')^i < (1+\epsilon')^id_G^{(t_1)}(u,v).
\]

Let $t_0\leq t_1$ be the first time step where $\tilde d_{i-1}^{(t_0)}(u,v) > D_i(1+\epsilon')^i$. We must have $d_G^{(t_0)}(u,v) > D_i$ since otherwise, the induction hypothesis would imply $\tilde d_{i-1}^{(t_0)}(u,v)\leq d_G^{(t_0)}(u,v)(1+\epsilon')^{i-1}\leq D_i(1+\epsilon')^{i-1}$, contradicting the choice of $t_0$. Since also $d_G^{(t_0)}(u,v)\leq d_G^{(t_1)}(u,v)\leq D_{i+1}$, Lemma~\ref{Lem:DynSep} implies that there is a vertex $s\in S_i^{(t_0)}(u) = S_i^{(t_0)}(u,v) = S_i^{(t_1)}(u,v)$ such that $d_G^{(t_1)}(u,v) = d_G^{(t_1)}(u,s) + d_G^{(t_1)}(s,v)$, $d_G^{(t_1)}(u,s)\leq D_i$, and $d_G^{(t_1)}(s,v)\leq D_i$.

Pick $j$ such that $d_G^{(t_1)}(u,v)\in (d_{i,j},d_{i,j+1}]$. By the induction hypothesis,
\[
  \tilde d_{i-1}^{(t_1)}(u,s) + \tilde d_{i-1}^{(t_1)}(s,v)\leq (1+\epsilon')^{i-1}d_G^{(t_1)}(u,v)\leq (1+\epsilon')^{i-1}d_{i,j+1} = (1+\epsilon')^id_{i,j}.
\]
Hence, $Q_{i,j}(u,v)$ is non-empty at time step $t_1$ so
\[
  \tilde d_i^{(t_1)}(u,v)\leq\tilde d_{i,j}^{(t_1)}(u,v) = (1+\epsilon')^{i}d_{i,j}\leq (1+\epsilon')^{i}d_G^{(t_1)}(u,v).
\]
This shows the induction step.

\paragraph{Running time:}
The analysis is similar to the one for exact distances. Pick an $i\in\{0,\ldots,\lfloor\log_\rho n\rfloor\}$ with $D_i\geq 33\lg n$. The total time to maintain $S_i(u)$ over all $u$ is $O(mn)$.

Observe that each approximate distance $\tilde d_{i-1}(u_1,u_2)$ is of the form $(1+\epsilon')^{i'}d_{i',j}$ for $i'\leq i-1$. Since each element $s$ in a queue $Q_{i,j}(u,v)$ has key value $\tilde d_{i-1}(u,s) + \tilde d_{i-1}(s,v)$, it follows that the number of increase-key operations applied to $s$ in $Q_{i,j}(u,v)$ is $O(\log_{1+\epsilon'}D_i) = O(\log D_i/\epsilon') = O(\log n/\epsilon')$. For our purpose, a simplified queue $Q_{i,j}(u,v)$ suffices which keeps a counter of the number of elements of key value at most $(1+\epsilon')^id_{i,j}$; this follows since the min key value is at most $(1+\epsilon')^id_{i,j}$ if and only if the counter is strictly greater than $0$. Every queue operation for $Q_{i,j}(u,v)$ can then be supported in $O(1)$ time. The number of elements in $Q_{i,j}(u,v)$ over all $u$, $v$, and $j$ is $O(cn^3\log n/D_i) = O(n^3\log n/(D_i\epsilon'))$ by Lemma~\ref{Lem:DynSep}. This gives a total time bound of $O(mn + n^3\log^2n/(D_i(\epsilon')^2))$. This dominates the time spent on maintaining priority queues $Q_i(u)$.

Recall from above that $\epsilon' \leq \ln(1+\epsilon)/(\lfloor\log_{\rho}n\rfloor)$. The only additional constraint on $\epsilon'$ is that $(1+\epsilon')^c = \rho$ for some $c\in\mathbb N$. This can be achieved with $\epsilon' = \Theta(\ln(1+\epsilon)/(\lfloor\log_{\rho}n\rfloor))$. Hence, we get a time bound of $O(mn + n^3\log^4n/(D_i\epsilon^2))$.

Note that this bound is no better than the exact data structure for small $D_i$. We thus consider a hybrid data structure that only applies our data structure when $D_i$ is above some distance threshold $d$ and otherwise applies the data structure of Even and Shiloach which takes a total of $O(mnd)$ time. Summing over all $D_i > d$ and applying a geometric sums argument, the total time for our hybrid data structure is
\[
  O(mnd + \sum_{i: D_i > d}n^3\log^4n/(D_i\epsilon^2))) = O(mnd + n^3\log^4n/(d\epsilon^2)))
\]
Setting $d = n\log^2n/(\epsilon\sqrt m)$ gives Theorem~\ref{Thm:DetApprox}. Showing the bound for reporting approximate shortest paths in the theorem is done in the same way as in Section~\ref{subsec:DetExact}.

\section{Randomized Decremental APSP}\label{sec:Rand}
In this section, we provide a randomized $(1+\epsilon)$-approximate data structure and show Theorem~\ref{Thm:Rand}. The data structure is Las Vegas and works against an adaptive adversary. In contrast, the data structures of \cite{baswana2002improved} and \cite{bernstein2016maintaining} are both Monte Carlo and can only handle an oblivious adversary.

\subsection{High-level description}
    We start by giving a high-level description of our data structure and sketch its analysis. We focus our attention on maintaining approximate distances close to the value $D_i$ from a single vertex $u$ and for now we assume an oblivious adversary.
    
    \paragraph{Maintaining a sampled separator subset:} Instead of maintaining each separator $S_{i,j}(u,v)$ (with associated with priority queue $Q_{i,j}(u,v)$) as the full vertex separator $S_i(u)$, we obtain a speed-up by only maintaining a sampled subset of $S_i(u)$. As long as this sampled subset certifies that there is a short two-hop path from $u$ to $v$, the data structure proceeds as in the previous section. When this is no longer the case, there might still be a short two-hop path from $u$ to $v$ through a non-sampled vertex $s$ in the full separator set $S_i(u)$. However, since there are no more sampled candidates, the expected number of vertices of $S_i(u)$ that provide a short two-hop path is small and we can update $S_{i,j}(u,v)$ to be this small subset.

It follows that $S_{i,j}(u,v)$ can be kept small at all times, which is needed to give a speed-up.

\paragraph{A speed-up using shallow in-trees:} The problem with the data structure sketched above is that the entire set $S_i(u)$ had to be scanned in order to update $S_{i,j}(u,v)$ which means that the data structure will not be faster than our deterministic structure from the previous section. To deal with this, consider the following modification. The set $S_{i,j}(u,v)$ is updated as before by scanning over the entire set $S_i(u)$. Now, an in-tree $T(v)$ is grown from $v$ of radius at most $\epsilon' D_i$. Each vertex $v'$ in $T(v)$ then inherits the set of $v$, i.e., $S_{i,j}(u,v')$ is updated to the set $S_{i,j}(u,v)$ and this update is fast since $S_{i,j}(u,v)$ is small in expectation. This works since $v$ is a proxy for $v'$ in the sense that a short two-hop path from $u$ to $v'$ is also a short two-hop path from $u$ to $v$ (as $T(v)$ is an in-tree of small radius). Now, the time spent on the single scan of $S_i(u)$ can be distributed among all vertices of $T(v)$ and the number of such vertices must be at least $\epsilon' D_i + 1$ (if not, $v$ would be within distance $\epsilon' D_i$ from $u$).

Unfortunately, the time analysis for the above procedure breaks down if the in-trees grown during the sequence of updates overlap too much. We now sketch how to deal with this. Mark vertices of each in-tree grown so far. When the BFS procedure grows a new in-tree $T(v)$, this procedure is modified by having it backtrack at previously marked vertices which thus become leaves of $T(v)$; this set of marked leaves will be referred to as $L$ in the detailed descripton below.

\paragraph{Case $1$, dealing with a large in-tree:} If the number of unmarked vertices visited in $T(v)$ is greater than $\epsilon' D_i$, the above procedure and analysis can be applied; this is referred to as Case $1$ in the detailed description below.

\paragraph{Case $2$, dealing with a small in-tree:} Otherwise, we are in Case $2$; here we recall that $T(v)$ has small radius and observe that the only way to enter $T(v)$ from $G\setminus T(v)$ is through $L$. Hence, for every vertex $s$ in the union $\cup_{v'\in L}S_{i,j}(u,v')$, there is a good two-hop path from $u$ to $v$ through $s$. But since we know that there is only a small number of such vertices left (in expectation), this union must be small. Furthermore, the union must contain a good separator for every vertex in $T(v)$ (again because $T(v)$ has small radius and because $T(v)$ must be entered through $L$) and we thus have an efficient way to update $S_{i,j}(u,w)$ for all $w\in T(v)$.

\paragraph{Handling an adaptive adversary:} Above we assumed an oblivious adversary. When the adversary is adaptive, we need to be more careful since the approximate distances reported might reveal information about which vertices have been sampled. To deal with this, we round up every two-hop distance on a given distance scale to the same upper bound value (this will only increase the weight of each two-hop path by a small factor so that the output to a query will still be $(1+\epsilon)$-approximate). Hence, the rounded up approximate weight of a two-hop path $u\leadsto s\leadsto v$ is the same for every choice of "good" separator vertex $s$ regardless of whether it was sampled or not. It follows that our randomized structure outputs the same distance estimates as a slower deterministic algorithm that maintains the full separator sets. Hence, the updates done by the adversary does depend on answers to previous approximate distance queries, as desired.

This completes the high-level description of our data structure.


\subsection{The data structure}
We now make the above formal. First, redefine $\rho = \frac{34 - \frac 1 2}{33} = \frac{67}{66}$ and pick $\epsilon'$ such that $(1+\epsilon')^c = \rho$ for some $c\in\mathbb N$ and such that $\rho(1+\epsilon')\leq\frac{34}{33}$. For each $u$ and $i$ such that $D_i\geq 33\lg n$, a separator $S_i(u)$ is maintained with a data structure $\mathcal S_i(u)$ as in Section~\ref{sec:Det}.

We extend the range of index $j$ by $1$ so that $j\in\{0,\ldots,c + 1\}$. Each structure $\mathcal D_{i,j}(u)$ maintains a growing set $M_{i,j}(u)$ of marked vertices; this set is initially empty. In the following, let $U_{i,j}(u) = V\setminus M_{i,j}(u)$ denote the set of unmarked vertices and let $G_{U_{i,j}(u)}$ denote the graph with vertex set $V$ and containing the edges of $G$ having at least one unmarked endpoint.

In each update, $\mathcal D_{i,j}(u)$ maintains $S_{i, j}(u, v)$ and $Q_{i,j}(u,v)$ for $v\in V$ in the following way.

For each $v \in V$ and every vertex $s$ added to $S_i(u)$ in the current update, $s$ is added to $S_{i, j}(u, v)$ with some probability $p$ to be fixed later. Note that only vertices $v$ for which $s$ is actually added to $S_{i,j}(u,v)$ need to be processed. In Appendix \ref{app:sampling}, we employ a different sampling scheme that avoids having to flip a coin for every vertex $v \in V$ in every update. 

For vertices $v$ such that $v\in M_{i,j}(u)$ or $v\in U_{i,j}(u)$ and $\tilde d_{i-1}(u,v)\leq D_i(1+\epsilon')^{2i}$, no further processing is done.

 Now, assume that $v\in U_{i,j}(u)$ and that $\tilde d_{i-1}(u,v) > D_i(1+\epsilon')^{2i}$. If this inequality did not hold in the previous update, each vertex of $S_{i, j}(u, v)$ is added to a new min-queue $Q_{i,j}(u,v)$ with key values as in the previous section. Conversely, if the inequality did hold in the previous update, each new vertex added to $S_{i, j}(u, v)$ in the current update is added to $Q_{i,j}(u,v)$.





If the min key value of $Q_{i,j}(u,v)$ is greater than $d_{i,j}(1+\epsilon')^{2i}$, $\mathcal D_{i,j}(u)$ grows an in-tree $T(v)$ from $v$ in $G_{U_{i,j}(u)}$ up to radius $\epsilon'D_i$.

There are now two cases: $|V(T(v))\setminus M_{i,j}(u)| > \epsilon'D_i$ and $|V(T(v))\setminus M_{i,j}(u)|\leq \epsilon'D_i$.

\begin{description}
\item[Case 1:] If $|V(T(v))\setminus M_{i,j}(u)| > \epsilon'D_i$ then $\mathcal D_{i,j}(u)$ scans once over $S_i(u)$ to find the subset of vertices $s\in S_i(u)$ for which $\tilde d_{i-1}(u,s) + \tilde d_{i-1}(s,v)\leq d_{i,j}(1+\epsilon')^{2i}$. For each $v'\in V(T(v))$, $Q_{i,j}(u,v')$ is set to contain exactly this subset of vertices $s$ but with key value $\tilde d_{i-1}(u,s) + \tilde d_{i-1}(s,v')$.
\item[Case 2: ] If $|V(T(v))\setminus M_{i,j}(u)| \leq \epsilon'D_i$ then let $L = V(T(v))\cap M_{i,j}(u)$ and let $Q = \cup_{v'\in L}Q_{i,j}(u,v')$. For each $v'\in V(T(v))\setminus L$, $\mathcal D_{i,j}(u,v)$ sets $Q_{i,j}(u,v')$ to contain the subsets of elements $s\in Q$ with $\tilde d_{i-1}(u,s) + \tilde d_{i-1}(s,v)\leq d_{i,j}(1+\epsilon')^{2i}$; their key values are $\tilde d_{i-1}(u,s) + \tilde d_{i-1}(s,v')$.
\end{description}

In both cases, $\mathcal D_{i,j}(u)$ then marks all vertices of $T(v)$, i.e., $M_{i,j}(u) \leftarrow M_{i,j}(u)\cup V(T(v))$.

Approximate distances $\tilde d_{i,j}(u,v)$ are maintained by $\mathcal D_{i,j}(u)$ in a way similar to that in Section~\ref{subsec:DetApprox}: $\tilde d_{i,j}(u,v) = (1+\epsilon')^{2i}d_{i,j}$ if the min key value of $Q_{i,j}(u,v)$ is at most $(1+\epsilon')^{2i}d_{i,j}$ and otherwise $\tilde d_{i,j}(u,v) = \infty$.

Data structures $\mathcal D_i(u)$ as well as the overall data structure $\mathcal D$ work exactly as in Section~\ref{subsec:DetApprox}.

\subsection{Correctness}
Consider any point during the sequence of edge deletions. We will show that for suitable choice of $\epsilon'$, we have $d_G(u,v)\leq\tilde d(u,v)\leq (1+\epsilon)d_G(u,v)$.

We will show that during all updates and for all vertex pairs $(u,v)$, if $d_G(u,v)\in (0,D_{i+1}(1+\epsilon')]$ then $\tilde d_i(u,v)\leq (1+\epsilon')^{2i}d_G(u,v)$. By picking $\epsilon' = \ln(1+\epsilon)/(2\lfloor\log_\rho n\rfloor)$, we will then get $d_G(u,v)\leq\tilde d_G(u,v)\leq (1+\epsilon')^{2\lfloor\log_\rho n\rfloor}d_G(u,v)\leq e^{2\lfloor\log_\rho n\rfloor\epsilon'}d_G(u,v)\leq (1+\epsilon)d_G(u,v)$, as desired.

The proof is by induction on $i$. The claim is clear for $i$ with $D_i<33\lg n$ since then we use the data structure of Even and Shiloach. Now, consider an $i$ with $D_i\geq 33\lg n$ and assume that the claim holds for values less than $i$. By the induction hypothesis, we only need to consider pairs $(u,v)$ with $d_G(u,v)\in (D_i(1+\epsilon'),D_{i+1}(1+\epsilon')]$, i.e., $d_G(u,v)\in (d_{i,j},d_{i,j+1}]$ with $j > 0$.

We first show the following invariant for marked vertices that holds prior to each update over the entire sequence of updates:
\begin{Inv}\label{Inv:DynSep}
At the end of each update, for every $w\in M_{i,j}(u)$ with $d_G(u,w)\in (d_{i,j},d_{i,j+1}]$, each shortest $u$-to-$w$ path in $G$ intersects a vertex $s\in Q_{i,j}(u,w)$ such that $d_G(u,s)\leq D_i$ and $d_G(s,w)\leq D_i$.
\end{Inv}
\begin{proof}
The invariant is shown by induction on the rank of $w$ in the order in which vertices are marked. Note that this is a proof by induction inside a step of the main proof by induction on $i$; in addition to the induction hypothesis stated above, we may thus assume that the invariant holds for values less than $i$. Additionally, for the current value of $i$, we may assume by induction that the invariant holds for vertices of lower rank than $w$.

Let $t_1$ be a time step with $w\in M_{i,j}(u)^{(t_1)}$ and $d_G^{(t_1)}(u,w)\in (d_{i,j},d_{i,j+1}]$, let $t_0\leq t_1$ be the time step in which $w$ was marked, and let $r$ be the vertex from which an in-tree $T(r)\ni w$ was grown in time step $t_0$. Let $P$ be a shortest $u$-to-$w$ path in $G^{(t_1)}$.

We must have $\tilde d_{i-1}^{(t_0)}(u,r) > D_i(1+\epsilon')^{2i}$ since otherwise, no processing would be done for $r$ in time step $t_0$, contradicting that $T(r)$ is grown in that time step. We also have $d_G^{(t_0)}(u,r) > D_i(1+\epsilon')$ since otherwise the induction hypothesis would give the contradiction $D_i(1+\epsilon')\geq d_G^{(t_0)}(u,r)\geq \tilde d_{i-1}^{(t_0)}(u,r)/(1+\epsilon')^{2(i-1)} > D_i(1+\epsilon')^{2i - 2(i-1)} = D_i(1+\epsilon')^2$.

By the triangle inequality and the fact that $w\in T(r)$ and $T(r)$ has radius at most $\epsilon'D_i$, we get $d_G^{(t_0)}(u,w)\geq d_G^{(t_0)}(u,r) - d_G^{(t_0)}(w,r) > D_i(1+\epsilon') - \epsilon'D_i = D_i$. Hence, $D_i < d_G^{(t_0)}(u,w)\leq d_G^{(t_1)}(u,w)\leq d_{i,j+1}\leq\frac{34}{33}D_i$ so by Lemma~\ref{Lem:DynSep}, $P$ intersects $S_i^{(t_0)}(u)$ and for the first such intersection vertex $s$ along $P$, $d_G^{(t_0)}(u,s)\leq d_G^{(t_1)}(u,s)\leq D_i$ and $d_G^{(t_0)}(s,w)\leq d_G^{(t_1)}(s,w)\leq D_i$. We consider the two cases in the description of $\mathcal D_{i,j}(u)$:
\paragraph{Case $1$:}
It suffices to show that $s\in Q_{i,j}^{(t_1)}(u,w)$.

We have $d_G^{(t_0)}(s,r)\leq d_G^{(t_0)}(s,w) + d_G^{(t_0)}(w,r)\leq (1+\epsilon')D_i$. By the induction hypothesis,
\begin{align*}
\tilde d_{i-1}^{(t_0)}(u,s) + \tilde d_{i-1}^{(t_0)}(s,r) & \leq (1+\epsilon')^{2(i-1)}d_G^{(t_0)}(u,r)\\
  & \leq (1+\epsilon')^{2i-2}(d_G^{(t_0)}(u,w) + \epsilon'D_i)\\
  & \leq (1+\epsilon')^{2i-2}(d_G^{(t_1)}(u,w) + \epsilon'D_i)\\
  & \leq (1+\epsilon')^{2i-2}(d_{i,j+1} + \epsilon'd_{i,j+1})\\
  & = (1+\epsilon')^{2i}d_{i,j},
\end{align*}
so $s\in Q_{i,j}^{(t_0)}(u,w) = Q_{i,j}^{(t_1)}(u,w)$, showing maintenance of the invariant.

\paragraph{Case $2$:}
We first show that $P$ must intersect the set $L$ formed when growing $T(r)$ in time step $t_0$. Since we are in Case $2$, every leaf of $T(r)$ either belongs to $L$ or has no ingoing edges from vertices not in $T(r)$; otherwise, $T(r)$ would contain more than $\epsilon'D_i$ vertices since it is grown up to radius $\epsilon'D_i$. Hence, the only way that $P$ could not intersect $L$ would be if $P$ were fully contained in $T(r)$. But this is not possible since then $T(r)$ would contain at least $|P| + 1\geq d_{i,j} + 1 > D_i\geq \epsilon'D_i$ unmarked vertices at the beginning of time step $t_0$, contradicting that we are in Case $2$.

Thus, $P$ intersects $L$ and we have $w\notin L$ since $w$ was an unmarked vertex of $T(r)$ when growing this tree. Let $x$ be the last vertex of $P$ belonging to $L$. Since $x$ was marked earlier than $w$, the induction hypothesis implies that the subpath of $P$ from $u$ to $x$ intersects $Q_{i,j}^{(t_1)}(u,x) = Q_{i,j}^{(t_0)}(u,x)$ in a vertex $s_x$ such that $d_G^{(t_0)}(u,s_x)\leq d_G^{(t_1)}(u,s_x)\leq D_i$ and $d_G^{(t_0)}(s_x,x)\leq d_G^{(t_1)}(s_x,x)\leq D_i$. The latter implies $d_G^{(t_0)}(s_x,r)\leq (1+\epsilon')D_i$. By the induction hypothesis, $\tilde d_{i-1}^{(t_0)}(u,s_x) + \tilde d_{i-1}^{(t_0)}(s_x,r) \leq (1+\epsilon')^{2(i-1)}(d_G^{(t_0)}(u,s_x) + d_G^{(t_0)}(s_x,r)) = (1+\epsilon')^{2(i-1)}d_G^{(t_0)}(u,r)$ which by the same calculations as in Case $1$ is at most $(1+\epsilon')^{2i}d_{i,j}$. Inspecting the execution of $\mathcal D_{i,j}(u)$ in Case $2$, it follows that $s_x\in Q_{i,j}^{(t_0)}(u,w) = Q_{i,j}^{(t_1)}(u,w)$.
We have $s_x\in Q_{i,j}^{(t_0)}(u,x)\subseteq S_i^{(t_0)}(u)$. Since $s$ is the first vertex of $S_i^{(t_0)}(u)$ along $P$, $P$ can thus be decomposed into $u\leadsto s\leadsto s_x\leadsto x\leadsto w$ and we get $d_G^{(t_1)}(u,s_x)\leq D_i$ (as shown above) and $d_G^{(t_1)}(s_x,w)\leq d_G^{(t_1)}(s,w)\leq D_i$. This shows maintenance of the invariant with $s_x$ in place of $s$.
\end{proof}
Now, we continue with our proof by induction on $i$. Consider any vertex pair $(u,v)$ at the end of an update with $d_G(u,v)\in (d_{i,j},d_{i,j+1}]$ and $j > 0$.

If $v\notin M_{i,j}(u)$ and $\tilde d_{i-1}(u,v)\leq (1+\epsilon')^{2i}D_i$ then $d_G(u,v)\leq\tilde d_{i,j}(u,v)\leq (1+\epsilon')^{2i}D_i < (1+\epsilon')^{2i}d_G(u,v)$, as desired.

Now assume that $v\notin M_{i,j}(u)$ and $\tilde d_{i-1}(u,v) > (1+\epsilon')^{2i}D_i$. Since $v$ was not marked in the current update, the min key value of $Q_{i,j}(u,v)$ at the end of the update is at most $d_{i,j}(1+\epsilon')^{2i}$ so $d_G(u,v)\leq\tilde d_{i,j}(u,v)\leq (1+\epsilon')^{2i}d_{i,j} < (1+\epsilon')^{2i}d_G(u,v)$, as desired.

Finally assume that $v\in M_{i,j}(u)$. By Invariant~\ref{Inv:DynSep}, there is an $s\in Q_{i,j}(u,v)$ such that $d_G(u,v) = d_G(u,s) + d_G(s,v)$, $d_G(u,s)\leq D_i$, and $d_G(s,v)\leq D_i$. By the induction hypothesis, $d_G(u,v)\leq \tilde d_i(u,v)\leq\tilde d_{i-1}(u,s) + \tilde d_{i-1}(s,v)\leq (1+\epsilon')^{2(i-1)}d_G(u,v)$, as desired. This completes the inductive proof and correctness follows.



\subsection{Running time}
Maintaining separators $S_i(u)$ over all $u$ and $i$ takes $O(mn\log_{\rho} n) = O(mn\log n)$ time by Lemma~\ref{Lem:DynSep}. For the remaining time analysis, we focus on a single data structure $\mathcal D_{i,j}(u)$. It is useful in the following to regard this structure as handling an adversarial sequence of updates consisting of changes to approximate distances maintained by structures $\mathcal D_{i'}(v)$ for $i' < i$ and $v\in V$. We will give an expected time bound for $\mathcal D_{i,j}(u)$ and we shall rely on the following key lemma.
\begin{Lem}\label{Lem:SizeZeroSamples}
Let $r\in V$. If at some point in the sequence of updates, $\mathcal D_{i,j}(u)$ grows an in-tree from $r$ then at the end of that update, the expected number of vertices $s\in S_i(u)$ satisfying $\tilde d_{i-1}(u,s) + \tilde d_{i-1}(s,r)\leq D_i(1+\epsilon')^{2i}$ is $O(\ln n/p)$. This bound holds against an adaptive adversary.
\end{Lem}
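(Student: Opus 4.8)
The plan is to reduce the statement to the elementary fact that a \emph{fixed} set which a $p$-random sample fails to hit must have size $O(\ln n/p)$ with high probability, and then to use the rounding scheme to make that argument legitimate against an adaptive adversary.

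First I would read off what growing an in-tree from $r$ in some update $t$ tells us. Inspecting $\mathcal D_{i,j}(u)$, this happens only if $r\notin M_{i,j}(u)$, $\tilde d_{i-1}(u,r)>D_i(1+\epsilon')^{2i}$, and the min key of $Q_{i,j}(u,r)$ at that moment exceeds $d_{i,j}(1+\epsilon')^{2i}$. While $r$ is unmarked, $Q_{i,j}(u,r)$ is created once (when $\tilde d_{i-1}(u,r)$ first passes $D_i(1+\epsilon')^{2i}$) and thereafter is only enlarged by vertices of $S_i(u)$ sampled into $S_{i,j}(u,r)$ and only shrunk by discarding vertices whose key has grown above $d_{i,j}(1+\epsilon')^{2i}$; since keys are non-decreasing, at the end of update $t$
\[
Q_{i,j}^{(t)}(u,r)=\bigl\{\,s\in S_{i,j}^{(t)}(u,r)\ :\ \tilde d_{i-1}^{(t)}(u,s)+\tilde d_{i-1}^{(t)}(s,r)\le d_{i,j}(1+\epsilon')^{2i}\,\bigr\}.
\]
Let $B_t$ be the set we must bound, namely the $s\in S_i^{(t)}(u)$ with $\tilde d_{i-1}^{(t)}(u,s)+\tilde d_{i-1}^{(t)}(s,r)\le D_i(1+\epsilon')^{2i}$. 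Since $D_i\le d_{i,j}$, every sampled vertex of $B_t$ would lie in $Q_{i,j}^{(t)}(u,r)$; hence if an in-tree is grown from $r$ in update $t$ then \emph{no} vertex of $B_t$ was ever sampled into $S_{i,j}(u,r)$.

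Next I would make $B_t$ a fixed object. Each vertex of $S_i(u)$ is placed into $S_{i,j}(u,r)$ by an independent bias-$p$ coin flipped when it enters $S_i(u)$, and by the rounding argument underlying the correctness analysis the distance estimates the data structure maintains — hence the adversary's deletion sequence, hence $S_i^{(\cdot)}(u)$, the estimates $\tilde d_{i-1}^{(\cdot)}$, and all the sets $B_t$ — do not depend on these coins (rounding every short two-hop distance on a scale to a common value makes the reported estimates coincide with those of a deterministic structure keeping full separators). So we may condition on the entire deletion sequence, treating $\{B_t\}_t$ as fixed while the coins are still fresh. Then fix $(u,i,j,r)$ and an update $t$: if $|B_t|>\tfrac{c\ln n}{p}$ for a suitable constant $c$, the probability that no vertex of $B_t$ is sampled into $S_{i,j}(u,r)$ is $(1-p)^{|B_t|}<e^{-c\ln n}=n^{-c}$, so the probability that an in-tree is grown from $r$ in update $t$ while $|B_t|>\tfrac{c\ln n}{p}$ is at most $n^{-c}$. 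There are only $\poly(n)$ tuples $(u,i,j,r,t)$ (at most $n$ choices of $u$ and of $r$, $O(\log n)$ of $i$, $O(1/\epsilon')$ of $j$, and at most $m\le n^2$ updates), so for $c$ large enough a union bound gives: with high probability, whenever an in-tree is grown from some $r$ in some update $t$ we have $|B_t|\le\tfrac{c\ln n}{p}$. Since always $|B_t|\le|S_i(u)|=O(n\log n/D_i)=O(n)$, the negligible failure event contributes $o(1)$ to the expectation, so the expected size of $B_t$ at the end of the update in which an in-tree is grown from $r$ is $O(\ln n/p)$; the argument used nothing about the adversary beyond independence of the coins, so it holds against an adaptive adversary.

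The step I expect to be the real obstacle is the second one — legitimizing the union bound against an adaptive adversary. A priori an adaptive adversary watching the reported distances could try to arrange that exactly the \emph{non}-sampled separator vertices are the ones giving short two-hop paths to $r$ at the moment $Q_{i,j}(u,r)$ empties, making $B_t$ large precisely when the hypothesis of the lemma holds and ruining the bound; the rounding scheme is exactly what blocks this, by decoupling the adversary's view from the sampling. A smaller, mechanical point to check is the first step: that $S_{i,j}(u,r)$ is never rebuilt while $r$ is unmarked and that $Q_{i,j}^{(t)}(u,r)$ is precisely the sampled part of the relevant sublevel set, so that ``in-tree grown from $r$'' genuinely certifies ``the sample missed $B_t$''.
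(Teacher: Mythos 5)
Your proposal is correct and follows essentially the same route as the paper: the paper likewise couples the randomized structure with a deterministic one that keeps the full separator set (this is exactly your rounding-based decoupling), observes that the two report identical estimates for $(u,r)$ up to the update $t_1$ in which the sampled queue empties, concludes that the update sequence up to $t_1$ is independent of the coins for this tuple, and then bounds $\Pr[t_1=t]\le(1-p)^{|X^{(t)}|}$ and computes the expectation directly by splitting into small and large $|X^{(t)}|$ — whereas you use a high-probability bound, a union bound over tuples and time steps, and the trivial $O(n)$ bound on the failure event, which is an equivalent computation. The one imprecision is your claim that the \emph{entire} deletion sequence is independent of the coins (after $r$ and its neighbours are marked, the reported estimates, and hence an adaptive adversary's subsequent deletions, may depend on the sample); the paper restricts the independence claim to the prefix ending at the update in which the in-tree is grown, which is all your union-bound event actually requires.
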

\begin{proof}
Assume that an in-tree is grown from $r$ at some time step $t_1$. Then we must have $\tilde d_{i-1}^{(t_1)}(u,r) > D_i(1+\epsilon')^{2i}$. Let $t_0\leq t_1$ be the earliest time step where $\tilde d_{i-1}^{(t_0)}(u,r) > D_i(1+\epsilon')^{2i}$. For the analysis, consider a modification $\mathcal D_{i,j}'(u)$ of $\mathcal D_{i,j}(u)$ which when processing $r$ in each update $t\in\{t_0,\ldots,t_2\}$ applies a deterministic algorithm that maintains $Q_{i,j}^{(t)}(u,r)$ as the set of all vertices $s\in S_i^{(t)}(u)$ with corresponding key values $\tilde d_{i-1}^{(t)}(u,s) + \tilde d_{i-1}^{(t)}(s,r)$; here $t_2$ is the time step of $\mathcal D_{i,j}'(u)$ in which $r$ is marked. Note that from time step $t_0$ to $t_2-1$, the min key value of $Q_{i,j}(u,r)$ is at most $d_{i,j}(1+\epsilon')^{2i}$. Thus, $t_1\leq t_2$ and $\mathcal D_{i,j}'(u)$ and $\mathcal D_{i,j}(u)$ maintain exactly the same approximate distances for each time step $t\in\{t_0,\ldots,t_1-1\}$, namely $\tilde d_i^{(t)}(u,r) = d_{i,j}(1+\epsilon')^{2i}$. Hence, the output revealed to the adversary during these updates is the same when using $\mathcal D_{i,j}'(u)$ as when using $\mathcal D_{i,j}(u)$. From this and from the observation that the update done by the adversary in time step $t_1$ only depends on outputs from earlier updates, the sequence of updates from time step $t_0$ to $t_1$ is independent of which vertices of $S_i(u)$ are sampled.

The above relates to the following experiment. We have a dynamic set $X$ undergoing a sequence of $k$ updates consisting of insertions and deletions of single elements. For $t = 0,\ldots,k$, let $n_t$ denote the number of elements in $X$ after the $t$'th update. Associated with $X$ we have a dynamic subset $Y$. The initial set $Y$ is obtained by sampling each element of the initial set $X$ independently with some probability $q$. Whenever a new element is inserted into $X$, it is added to $Y$ with probability $q$. We assume that the sequence of updates to $X$ is independent of which elements are sampled. Then for $t = 0,\ldots,k$, $\Pr[Y^{(t)} = \emptyset] = (1-q)^{n_t}$.

Returning to the analysis of our algorithm, let $q = p$ and let $X$ denote the dynamic subset of $S_i(u)$ consisting of elements $s$ with $\tilde d_{i-1}(u,s) + \tilde d_{i-1}(s,r) \leq D_i(1+\epsilon')^{2i}$ during time steps $t_0,\ldots,t_2$. Note that $Y^{(t_1)} = \emptyset$. Let $T = \{t_0,\ldots,t_2\}$. For each $t\in T$,
\[
 \Pr[t_1 = t]\leq \Pr[Y^{(t)} = \emptyset] = (1-p)^{|X^{(t)}|}]\leq e^{-p|X^{(t)}|}.
\]
Let $T_1$ be the set of elements $t\in T$ where $|X^{(t)}|\leq \ln((m+1)n)/p$ and let $T_2 = T\setminus T_1$. Since $|T_2|\leq |T|\leq m+1$ and since $|X^{(t)}|\leq n$ for each $t\in T$,
\begin{align*}
  E[|X^{(t_1)}|] & =    \sum_{t\in T_1}|X^{(t)}|\Pr[t_1 = t] + \sum_{t\in T_2}|X^{(t)}|\Pr[t_1 = t]\\
               & \leq \ln((m+1)n)/p\sum_{t\in T_1}\Pr[t_1 = t] + \sum_{t\in T_2}|X^{t}|e^{-p|X^{(t)}|}\\
               & \leq 3\ln n/p + \sum_{t\in T_2}|X^{(t)}|/((m+1)n)\\
               & \leq 3\ln n/p + 1\\
               & = O(\ln n/p),
\end{align*}
as desired.
\end{proof}

\begin{Cor}\label{Cor:SizeZeroSamples}
When a vertex $v$ is marked, $E[|Q_{i,j}(u,v)|] = O(\ln n/p)$ and this bound holds against an adaptive adversary.
\end{Cor}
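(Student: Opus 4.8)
The plan is to derive Corollary~\ref{Cor:SizeZeroSamples} directly from Lemma~\ref{Lem:SizeZeroSamples} by analyzing how $Q_{i,j}(u,v)$ is populated at the moment $v$ is marked. Recall that $v$ is marked in some update $t_0$ precisely because it lies in an in-tree $T(r)$ grown from some vertex $r$ in that update (possibly $r = v$ itself). So the first step is to split into the two cases from the description of $\mathcal D_{i,j}(u)$, Case~1 where $|V(T(r))\setminus M_{i,j}(u)| > \epsilon' D_i$ and Case~2 otherwise, and in each case express $Q_{i,j}(u,v)$ in terms of quantities controlled by Lemma~\ref{Lem:SizeZeroSamples}.

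In Case~1, $Q_{i,j}(u,v)$ is set to exactly the subset of $s \in S_i(u)$ with $\tilde d_{i-1}(u,s) + \tilde d_{i-1}(s,r) \leq d_{i,j}(1+\epsilon')^{2i}$. Since $d_{i,j} \leq D_i(1+\epsilon')^{c} = \rho D_i$ and $\rho(1+\epsilon')\le\frac{34}{33}$, we have $d_{i,j}(1+\epsilon')^{2i} \le D_i(1+\epsilon')^{2i}$ — actually we should just observe that any $s$ counted in $Q_{i,j}(u,v)$ here also satisfies $\tilde d_{i-1}(u,s) + \tilde d_{i-1}(s,r) \le D_i(1+\epsilon')^{2i}$, so $|Q_{i,j}(u,v)|$ is bounded by the quantity whose expectation Lemma~\ref{Lem:SizeZeroSamples} bounds by $O(\ln n/p)$, applied with $r$ as in the lemma (this is legitimate since an in-tree is grown from $r$ in update $t_0$). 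In Case~2, $Q_{i,j}(u,v)$ is a subset of $Q = \bigcup_{v' \in L} Q_{i,j}(u,v')$ where $L = V(T(r)) \cap M_{i,j}(u)$; but $T(r)$ contains at most $\epsilon' D_i$ unmarked vertices, so we cannot directly use that bound on $|L|$. Instead, I would peel back one more level: each $v' \in L$ was marked in some strictly earlier update, so by induction on the order in which vertices are marked (the same induction used in Invariant~\ref{Inv:DynSep}), each $Q_{i,j}(u,v')$ already satisfied a size bound when $v'$ was marked, and its size only shrinks afterwards (elements are only removed or pushed to $\infty$). Hmm — but $|L|$ itself could be large, so the union could still be large; the saving must come from the observation that all $v' \in L$ lie in the small-radius tree $T(r)$, so their $Q$-sets all consist of near-witnesses to $r$ as well, and hence $|Q|$ is again bounded by the $O(\ln n/p)$ quantity for $r$. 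Making this last reduction precise is the main obstacle: one must show that any $s$ appearing in $Q_{i,j}(u,v')$ for $v' \in L$ with $d_G(s,v') \le D_i$ also satisfies $\tilde d_{i-1}(u,s)+\tilde d_{i-1}(s,r)\le D_i(1+\epsilon')^{2i}$, using the triangle inequality $d_G(s,r)\le d_G(s,v') + d_G(v',r) \le D_i + \epsilon' D_i$ together with the approximation guarantee and the inductive correctness hypothesis, exactly as in the Case~2 computation inside the proof of Invariant~\ref{Inv:DynSep}.

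Once both cases are reduced to the quantity of Lemma~\ref{Lem:SizeZeroSamples} for the generator vertex $r$, the corollary follows by linearity of expectation (noting that only elements surviving to the end of the marking update $t_0$, and thereafter, are counted, so the count is dominated by $|X^{(t_0)}|$ or the analogous set for $Q$). I expect the Case~2 reduction — correctly certifying that the union $Q$ over the marked leaves $L$ is covered by the single $O(\ln n/p)$-expectation bound for $r$, rather than scaling with $|L|$ — to be the crux; once it is handled, the adaptive-adversary claim carries over verbatim from Lemma~\ref{Lem:SizeZeroSamples}, since all the randomness used is the same sampling of $S_i(u)$ and the adversary's updates through update $t_0$ are independent of it by the rounding argument already established.
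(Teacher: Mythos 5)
Your overall strategy is the paper's: in both cases, reduce $|Q_{i,j}(u,v)|$ at marking time to the quantity controlled by Lemma~\ref{Lem:SizeZeroSamples} applied to the root $r$ of the in-tree, and inherit the adaptive-adversary guarantee from that lemma. Case~1 of your argument is correct and matches the paper. But your Case~2 contains a genuine gap, and moreover the obstacle you flag as ``the crux'' is a detour that the data structure's definition already makes unnecessary. You try to argue that the entire union $Q=\bigcup_{v'\in L}Q_{i,j}(u,v')$ consists of near-witnesses for $r$, i.e.\ that every $s$ in every $Q_{i,j}(u,v')$ satisfies $\tilde d_{i-1}(u,s)+\tilde d_{i-1}(s,r)\le D_i(1+\epsilon')^{2i}$. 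That claim is not available: the queues $Q_{i,j}(u,v')$ were populated when the leaves $v'$ were marked (or by sampling), their key values only grow afterwards, and elements are not removed, so by the time $T(r)$ is grown an element $s\in Q_{i,j}(u,v')$ may well have $\tilde d_{i-1}(u,s)+\tilde d_{i-1}(s,v')$ (hence also the quantity relative to $r$) far above the threshold. Your proposed fix via $d_G(s,v')\le D_i$ assumes exactly the property that fails for such stale elements.

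The correct resolution is one line and requires no bound on $|Q|$ or $|L|$ at all: in Case~2 the data structure does not set $Q_{i,j}(u,v')$ to $Q$, but to the \emph{filtered} subset of $s\in Q$ satisfying $\tilde d_{i-1}(u,s)+\tilde d_{i-1}(s,r)\le d_{i,j}(1+\epsilon')^{2i}$ (the condition is stated relative to the root of the tree). Since $Q\subseteq S_i(u)$, the resulting $Q_{i,j}(u,v')$ is therefore a subset of $\{s\in S_i(u)\,:\,\tilde d_{i-1}(u,s)+\tilde d_{i-1}(s,r)\le d_{i,j}(1+\epsilon')^{2i}\}$, which is precisely the set whose expected size Lemma~\ref{Lem:SizeZeroSamples} bounds by $O(\ln n/p)$. (The potentially large size of $Q$ only matters for the running time of computing it, which the paper handles separately by charging $O(\log n/p)$ per leaf edge, not in this corollary.) With that substitution your argument closes; the rest of your write-up, including the adaptive-adversary remark, is fine.
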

\begin{proof}
Consider the update in which $v$ is marked and let $r$ be the root of the in-tree $T(r)$ containing $v$. If $|T(r)|\geq \epsilon'D_i$ then $Q_{i,j}(u,v) = Q_{i,j}(u,r)\subseteq S_i(u)$ and all $s\in Q_{i,j}(u,r)$ satisfy the inequality of Lemma~\ref{Lem:SizeZeroSamples}. In the case where $|T(r)| < \epsilon'D_i$, let $Q$ be as defined in the description of the data structure. Then vertices $s\in Q\subseteq S_i(u)$ are only added to $Q_{i,j}(u,v)$ if they satisfy the inequality of Lemma~\ref{Lem:SizeZeroSamples}. The corollary now follows.
\end{proof}

Now, we can bound the time spent by $\mathcal D_{i,j}(u)$. The total time spent on growing in-trees is $O(m)$ since every edge $(w_1,w_2)$ visited must have $w_1\notin M_{i,j}(u)$ at the beginning of the BFS search and $w_1\in M_{i,j}(u)$ immediately afterwards and a vertex can never be unmarked. This also bounds the time spent on marking vertices.

The total expected number of sampled vertices added to $Q_{i,j}(u,v)$ prior to $v$ being marked is at most $xp$ where $x$ is the size of the set $S_i(u)$ after the final update. By Lemma~\ref{Lem:DynSep}, $x = O(n\log n/D_i)$. By Corollary~\ref{Cor:SizeZeroSamples}, the expected size of $Q_{i,j}(u,v)$ after $v$ is marked is $O(\ln n/p)$. Using the same argument as in the running time analysis of Section~\ref{subsec:DetApprox}, the number of increase-key operations applied to a single element of $Q_{i,j}(u,v)$ is $O(\log n/\epsilon')$. Hence, the total expected time spent on operations on $Q_{i,j}(u,v)$ is $O((n\log n\cdot p/D_i + \log n/p)\log^3n/\epsilon)$.

Whenever $\mathcal D_{i,j}(u)$ grows an in-tree $T(r)$ with $|V(T(r))\setminus M_{i,j}(u,v)| > \epsilon'D_i$, scanning $S_i(u)$ takes $O(n\log n/D_i)$ time by Lemma~\ref{Lem:DynSep}. Since all vertices of $V(T(r))\setminus M_{i,j}(u,v)$ are marked just after $T(r)$ is grown and since vertices are never unmarked, the number of such trees over the course of the updates is at most $n/(\epsilon'D_i)$ so the total time for all these scans is $O(n^2\log^2n/(\epsilon D_i^2))$.

Whenever $\mathcal D_{i,j}(u)$ grows an in-tree $T(r)$ with $|V(T(r))\setminus M_{i,j}(u,v)| \leq \epsilon'D_i$, the set $\cup_{x\in L}Q_{i,j}(u,x)$ needs to be computed. Note that for each $x\in L$, $E[|Q_{i,j}(u,x)|] = O(\log n/p)$ by Corollary~\ref{Cor:SizeZeroSamples}. At least one edge $(y,x)$ ingoing to $x$ belongs to $T(r)$ and this edge is not part of any later grown in-tree since $x$ is marked immediately after $T(r)$ is grown. We charge a cost of $O(\log n/p)$ to $(y,x)$ for computing $Q_{i,j}(u,x)$. Over all $x\in L$, this pays for computing $\cup_{x\in L}Q_{i,j}(u,x)$ and we get a total expected time bound for this part of $O(m\log n/p)$.

Summing the above over all $u$, $v$, $i$, and $j$, we get a total expected time bound for our data structure of
\[
  \tilde O(mn/\epsilon + \sum_i\sum_j (n^3\cdot p/(D_i\epsilon) + n^2/(p\epsilon) + n^3/(\epsilon D_i^2) + mn/p).
\]

Since this bound is only fast for sufficiently large $i$, we pick a distance threshold $d$ and apply the algorithm of Even and Shiloach for distances of at most $d$ and our data structure for distances above $d$. By a geometric sums argument, our hybrid algorithm has a expected total time bound of
\begin{align*}
  & \tilde O(mnd + mn/\epsilon + n^3\cdot p/(d\epsilon^2) + n^2/(p\epsilon^2) + n^3/(\epsilon^2d^2) + mn/(p\epsilon))\\
  & = \tilde O(mnd + n^3\cdot p/(d\epsilon^2) + n^2/(p\epsilon^2) + n^3/(d^2\epsilon^2) + mn/(p\epsilon))
\end{align*}
Setting the second and fifth terms equal to each other, we get $p = \tilde\Theta(\sqrt{m\epsilon d}/n)$ and the time bound simplifies to
\[
  \tilde O(mnd + \sqrt{m}n^2/(\sqrt d\epsilon^{3/2}) + n^3/(\sqrt{md}\epsilon^{5/2}) + n^3/(d^2\epsilon^2)).
\]
We balance the first two terms by setting $d = \tilde\Theta(n^{2/3}/(m^{1/3}\epsilon))$ and we get a time bound of
\[
  \tilde O(m^{2/3}n^{5/3}n/\epsilon + n^{8/3}/(m^{1/3}\epsilon^2)),
\]
which shows the time bound of Theorem~\ref{Thm:Rand}.

\printbibliography[heading=bibintoc] 

\appendix

\section{Related Work} \label{sec:relatedWork}

\paragraph{Undirected APSP.} In undirected graphs, maintaing all-pairs shortest paths with stetch $(1+\eps)$ was first studied by Roditty and Zwick  \cite{Roditty2004} who achieved total running time $\tilde{O}(mn)$ for unweighted graphs. Their data structure was then derandomized by Henzinger et al. \cite{henzinger2016dynamic} which was in turn simplified in \cite{gutenberg2020deterministic}. In \cite{henzinger2016dynamic}, the authors also give a $(2+\eps)$-approximate all-pairs shortest path for undirected, unweighted graphs with total update time $\tilde{O}(n^{2.5}/\eps)$. For high stretch, Henzinger et al. \cite{Henzinger2014} gave an oblivious data structure with total update time $O(mn^{1/k + o(1)} \text{polylog } W)$ with stretch $O(k^k)$ for any integer $k \geq 1$. Chechik \cite{Chechik2018} obtained the same running time but obtained near-optimal stretch $2(1+\eps)k - 1$. Very recently, Chuzhoy and Saranurak \cite{chuzhoy2020deterministic} gave the first \emph{deterministic} decremental all pair shortest paths algorithm that achieves subcubic running time for any graph density. However, their algorithm only achieves a large constant approximation factor.

Finally, we also point out that there is a $(2+\eps)$-approximate all-pairs shortest paths by Bernstein \cite{bernstein2009fully} with $\tilde{O}(m)$ \emph{amortized} update time for \emph{fully-dynamic} graphs and that Abraham et al. \cite{abraham2013dynamic} showed that sublinear \emph{amortized} update time is possible for constant stretch.

\paragraph{APSP with Worst-case Update Time.} For the fully-dynamic setting, Thorup also introduced the problem of maintaining APSP with \emph{worst-case} update time. In \cite{thorup2005worst}, he presents a deterministic data structure that achieves $\tilde{O}(n^{2.75})$ worst-case time. This bound was recently improved to $\tilde{O}(n^{2.66})$ by Abraham et al. \cite{AbrahamCK17} who presented an adaptive randomized algorithm. Recently, Probst Gutenberg and Wulff-Nilsen \cite{gutenberg2020fully} gave a deterministic algorithm that breaks the $\tilde{O}(n
^{2.75})$ update time bound by Thorup.

\paragraph{Single-Source Shortest Path.} For the single-source shortest-path problem, a recent line of research \cite{bernstein2011improved, henzinger2014subquadratic} has culminated in a $(1+\eps)$-approximate data structure by Henzinger et al. \cite{Henzinger2014} for partially-dynamic \emph{undirected graphs}. They achieve total update time $m^{1+o(1)}$, however, they need to asssume an oblivious adversary. To overcome this restriction, Bernstein and Chechik, recently introduced a framework \cite{bernstein2016deterministic, bernstein2017deterministic, bernstein2017deterministicweighted} to maintain $(1+\eps)$-approximate shortest paths against an adaptive adversary in time $\tilde{O}(\min\{n^2, mn^{3/4}\})$. Even more recently, Chuzoy and Khanna \cite{Chuzhoy:2019:NAD:3313276.3316320} extended their framework and showed that it can be used to improve the static problems of vertex-capacitated max-flow and sparsest vertex cut. Probst Gutenberg and Wulff-Nilsen \cite{gutenberg2020deterministic} recently presented a deterministic algorithm that improves on the former bounds for sparse graphs with total update time $mn
^{0.5+o(1)}$. The existing data structures where futher extended in \cite{bernstein2020fully} to be path-reporting.

In the directed, weighted setting, Henzinger et al. \cite{henzinger2014sublinear} presented a $(1+\eps)$-approximate data structure with total update time $\tilde{O}(mn^{0.9+o(1)})$ for the decremental setting. A new approach by Bernstein et al. \cite{GutenbergW20a, nearOptDenseSSSP} has recently obtained running time $\tilde{O}(n^2, mn^{2/3})$ for decremental weighted digraphs, which is near-optimal when the graph is dense e.g. $m = \Theta(n^2)$. The simpler problem of maintain Single-Source Reachability in a decremental digraph was further solved to near-otimality \cite{bernstein2019decremental}. For Decremental Single-Source Reachability and SSSP, deterministic algorithms that improve over the classic ES-tree were given by Bernstein et al. \cite{detDiSSSP}. For the incremental setting, Probst Gutenberg et al. \cite{GutenbergWW20} recently obtained a deterministic $(1+\eps)$-approximate algorithm with total update time $\tilde{O}(n
^2)$. 

In both settings, the exact partially-dynamic SSSP problem was proven to require $\Omega(mn^{1-o(1)})$ total update time \cite{Roditty2004,abboud2014popular, henzinger2015unifying, GutenbergWW20}, assuming in various popular conjectures.

\section{Sampling technique} \label{app:sampling}
Here we describe how we use the sampling technique used by Wulff-Nilsen \cite{wulff2017fully} lemma 28 when $\mathcal{D}_{i, j}(u)$ adds vertices to $S_{i, j}$. Remember that when a vertex $s$ is added to $S_i(u)$ it is added to $S_{i, j}(u, v)$ with probability $p$ for every $v \in V$, but since flipping a coin for every vertex is too slow, we do something different to simulate that process:

Enumerate the vertices arbitrarily $v_1 \hdots v_n$ and let $\mathcal{E}_{k,s}$ denote the event that $v_k$ is the first to sample a vertex $s \in S_i(u)$. Clearly $p_{k, s} = \Pr[\mathcal{E}_{k,s}] = (1-p)^{k-1}p$ for every $k \leq n$, and $p_{n+1, s} = (1-p)^{n}$ i.e. the vertex wasn't added to any $S_{i, j}(u, v)$. The algorithm can then sample from this distrubtion to pick $k$ and then recurse on the remaining $n-k$ vertices.

It remains to show how to sample from this distribution. let $p_{i_1, i_2, s} = \sum_{i = i_1}^{i_2}p_{i, s}$ i.e. the probability that the we want $k$ is between $i_1$ and $i_2$. We precompute these values for $1 \leq i_1 \leq i_2 \leq n+1$ in time $O(n^2)$ using dynamic programming, note that these numbers are the same for every $\mathcal{D}_{i, j}$ so we only need to compute them once. 

Using these numbers we sample $k$ as follows: Start with $i_1 = 1$, $i_2 = n+1$ and $j = \lceil i_1 + i_2 / 2 \rceil$ With probabilty $\frac{p_{i_1, j, s}}{p{i_1, i_2, s}}$ set $i_2 = j$ and $i_1 = j+1$ otherwise, and recurse like binary search would until $i_1 = i_2 = k$. Clearly we can find $k$ in logarithmic time using the precomputed probabilities, so for every sampled vertex we can charge $O(\log n)$ running time to every $\mathcal{D}_{i, j}(u)$ for which the vertex was added to $S_{i, j}(u, v)$, in total $O(|S_i(u)|p \log n)$ for every  $\mathcal{D}_{i, j}(u)$.

\end{document}